\title{QPEL: Quantum Program and Effect Language}
\author{Robin Adams
\institute{Radboud University Nijmegen}
\email{r.adams@cs.ru.nl}}
\newtheorem{theorem}{Theorem}
\newtheorem{lemma}[theorem]{Lemma}
\newtheorem{lm}[theorem]{Lemma}
\newtheorem{corollary}{Corollary}[theorem]
\theoremstyle{definition}
\newtheorem{df}[theorem]{Definition}
\newcommand{\pareq}{\stackrel{\sim}{\rightarrow}}
\newcommand{\Set}{\mathbf{Set}}
\newcommand{\supp}{\operatorname{supp}}
\newcommand{\prop}{\ \mathrm{eff}}
\newcommand{\brackets}[1]{[\![ {#1} ]\!]}
\newcommand{\EMod}[1]{\mathbf{EMod}_{#1}}
\newcommand{\op}{\mathrm{op}}
\newcommand{\Conv}[1]{\mathbf{Conv}_{#1}}
\newcommand{\inl}[1]{\mathsf{inl} \left( {#1} \right)}
\newcommand{\inr}[1]{\mathsf{inr} \left( {#1} \right)}
\newcommand{\qbit}{\mathbf{qbit}}
\newcommand{\new}{\operatorname{new}}
\newcommand{\ket}[1]{\left| {#1} \right\rangle}
\newcommand{\FdHilbUn}{\mathbf{FdHilb}_\mathrm{Un}}
\newcommand{\Kl}[1]{\mathrm{Kl} \left( {#1} \right)}
\newcommand{\CStar}{\mathbf{CStar}_\mathrm{PU}^\mathrm{op}}
\newcommand{\trace}{\operatorname{tr}}
\begin{document}

\maketitle

\begin{abstract}
We present the syntax and rules of deduction of QPEL (Quantum Program and Effect Language), a language for describing both quantum programs, and properties of quantum programs --- \emph{effects} on the appropriate Hilbert space.  We show how semantics may be given in terms of \emph{state-and-effect triangles}, a categorical setting that allows semantics in terms of Hilbert spaces, C$^*$-algebras, and other categories.  We prove soundness and completeness results that show the derivable judgements are exactly those provable in all state-and-effect triangles.
\end{abstract}

\section{Introduction}

There is a growing number of quantum programming languages, and there is a need for a syntactic method of reasoning about these quantum programs: both in the hope of making automated tools for proving the correctness of programs, and because experience in other fields shows that many problems that are difficult when treated semantically

We present QPEL, a syntax for both describing quantum programs, and properties of quantum programs (quantum predicates, or effects).  This system should be useful for reasoning about quantum programs and proving their correctness, as well as showing more generally how a language for quantum effects may be added on top of any quantum programming language.
The part of the system that descibes quantum programs is loosely based on Selinger's Quantum Programming Language (QPL) \cite{Selinger2004}.

The part of the system that describes quantum programs is a \emph{linear} type theory (see \cite{Benton93aterm,Benton1995}): we are not able to duplicate data.  Duplication of quantum data would violate the no-cloning theorem.  We do allow deletion of data (which corresponds to e.g. measuring a qubit then throwing away the outcome of the measurement).

The part of the system that describes quantum predicates is based on the fact that the effects on a Hilbert state or C$^*$-algebra form an \emph{effect algebra} - in fact, an \emph{effect module} over the appropriate effect monoid \cite{Jacobs2013}.  

There is a categorical structure called the \emph{state-and-effect triangle} that has been shown to generalise several different ways of giving semantics to quantum computing, including Hilbert spaces and C$^*$-algebras.  The first version of QPEL we present captures all and only the structure of a state-and-effect triangle.  We show how to give semantics in an arbitrary triangle, and prove a Soundness and Completeness Theorem.  We proceed to discuss what would need to be added to the system to represent other features of a quantum programming language, particularly qubits.

The language QPEL has a homepage at \texttt{www.cs.ru.nl/$\sim$robina/QPEL}

\pagebreak

\section{Preliminaries}

\subsection{Notation}
If $E$ and $F$ are expressions involving partial functions, we write:
\begin{itemize}
\item
$E = F$
to denote: $E$ and $F$ are both defined, and their values are equal;
\item
$E \simeq F$
to denote: $E$ is defined if and only if $F$ is defined, in which case their values are equal (this is sometimes known as \emph{Kleene equality});
\item
$E \pareq F$
to denote: if $E$ is defined, then $F$ is defined and their values are equal (this is sometimes known as \emph{directed equality}).
\end{itemize}

\subsection{Effect Algebras and Effect Monoids}

We represent the effects on a quantum system by the elements of an \emph{effect module}
over an \emph{effect monoid} $M$, whose elements we call \emph{scalars}.  The canonical example is the effects on a Hilbert space or C$^*$-algebra, which form an effect module over $[0,1]$, with the scalars being probabilities.  These concepts were introduced in \cite{Jacobs2011}.

\begin{df}[Partial Commutative Monoid]
A \emph{partial commutative monoid} consists of a set $M$; an element $0 \in M$, the \emph{zero}; and a partial binary operation $\ovee : M^2 \rightharpoonup M$, the \emph{(partial) sum}; such that:
\begin{itemize}
\item
$x \ovee y \simeq y \ovee x$
\item
$x \ovee (y \ovee z) \simeq (x \ovee y) \ovee z$
\item
$x \ovee 0 = x$
\end{itemize}
for all $x,y,z \in M$.

We write $x \perp y$, $x$ is \emph{orthogonal} to $y$, iff $x \ovee y$ is defined.
\end{df}

\begin{df}[Effect Algebra]
An \emph{effect algebra} is a partial commutative monoid $E$ with a (total) function $(-)^\bot : E \rightarrow E$, the \emph{orthosupplement}, such that
\begin{itemize}
\item
$x \ovee y = 0^\bot$ iff $y = x^\bot$.
\item
If $x \perp 0^\bot$ then $x = 0$.
\end{itemize}

We write $1$ for $0^\bot$.
\end{df}

\begin{df}[Effect Algebra Homomorphism]
Let $E$ and $F$ be effect algebras.  An \emph{effect algebra homomorphism} $\phi : E \rightarrow F$ is a function such that, for all $x,y \in E$:
\begin{align*}
\phi(x \ovee y) & \pareq \phi(x) \ovee \phi(y) \\
\phi(x^\bot) & = \phi(x)^\bot
\end{align*}
\end{df}

\begin{lm}
For any effect algebra homomorphism $\phi$, we have $\phi(0) = 0$.
\end{lm}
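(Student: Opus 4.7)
The plan is to exploit the fact that $0 \ovee 0 = 0$ so that the homomorphism condition forces $\phi(0)$ to be idempotent under $\ovee$, and then to show that the only idempotent in an effect algebra is $0$.

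First I would observe that $0 \ovee 0 = 0$, using the unit law with $x = 0$. Applying the homomorphism condition gives
\[ \phi(0) = \phi(0 \ovee 0) \pareq \phi(0) \ovee \phi(0), \]
and since the left-hand side is defined, the right-hand side must be defined and equal to it. Write $a := \phi(0)$, so we have $a \perp a$ and $a \ovee a = a$.

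Next I would use the orthosupplement axiom $x \ovee y = 0^\bot$ iff $y = x^\bot$, applied with $y = a^\bot$ and $x = a$, to conclude $a \ovee a^\bot = 1$. Combining with the idempotence above and associativity (Kleene-style), the expression
\[ (a \ovee a) \ovee a^\bot = a \ovee a^\bot = 1 \]
is defined, so by associativity $a \ovee (a \ovee a^\bot) = a \ovee 1$ is also defined and equals $1$. In particular, $a \perp 1 = 0^\bot$.

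The final step is to apply the second effect algebra axiom: if $a \perp 0^\bot$ then $a = 0$. This gives $\phi(0) = 0$. The only subtlety worth flagging is the careful use of $\simeq$ and $\pareq$ throughout to ensure all the partial sums we invoke are in fact defined; this is routine once one notices the idempotence equation, which I expect to be the conceptual heart of the argument.
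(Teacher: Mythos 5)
Your proof is correct, but it takes a genuinely different route from the paper. The paper's argument is a two-line cancellation argument: from $\phi(0)\ovee\phi(0)=\phi(0)=\phi(0)\ovee 0$ it concludes $\phi(0)=0$ by appealing to the cancellation law for effect algebras, which it cites from the literature rather than proving. You instead stay entirely inside the axioms as stated: having derived the idempotence $a\ovee a=a$ for $a=\phi(0)$ exactly as the paper does, you combine it with $a\ovee a^\bot=1$ and the Kleene-associativity axiom to show $a\ovee(a\ovee a^\bot)=a\ovee 1$ is defined, and then invoke the positivity axiom ($x\perp 0^\bot$ implies $x=0$) to conclude. All the definedness bookkeeping in your argument checks out: $\phi(0\ovee 0)$ is defined because $\phi$ is total and $0\ovee 0=0$, so the directed equality in the homomorphism condition does force $\phi(0)\ovee\phi(0)$ to be defined, and the associativity step is a legitimate use of $\simeq$ since $(a\ovee a)\ovee a^\bot$ literally equals the defined sum $a\ovee a^\bot$. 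What your approach buys is self-containedness --- it needs nothing beyond the axioms listed in the paper's definitions, whereas the paper's proof leans on cancellation, a nontrivial derived property whose proof is outsourced. What the paper's approach buys is brevity and reuse: the same cancellation step is invoked again in the subsequent lemma that $x\cdot 0=0$ in an effect monoid, so citing it once pays for itself.
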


\begin{proof}
\begin{align*}
\phi(0 \ovee 0) & = \phi(0) \\
\therefore \phi(0) \ovee \phi(0) & = \phi(0) \\
& = \phi(0) \ovee 0 \\
\therefore \phi(0) & = 0
\end{align*}
since any effect algebra satisfies cancellation \cite{Dvurecenskij2000}.
\end{proof}

\begin{df}[Effect Monoid]
An \emph{effect monoid} is an effect algebra $E$ with a binary operation $\cdot : E^2 \rightarrow E$, the \emph{multiplication}, such that
\begin{itemize}
\item
$(x \ovee y) \cdot z \pareq (x \cdot z) \ovee (y \cdot z)$
\item
$x \cdot (y \ovee z) \pareq (x \cdot y) \ovee (x \cdot z)$
\item
$1 \cdot x = x \cdot 1 = x$
\item
$x \cdot (y \cdot z) = (x \cdot y) \cdot z$
\end{itemize}
The effect monoid is \emph{commutative} iff $x \cdot y = y \cdot x$ for all $x$, $y$.
\end{df}

An effect monoid is a monoid in the category of effect algebras \cite{Jacobs2011}.

\begin{lm}
In any effect monoid, $x \cdot 0 = 0 \cdot x = 0$.
\end{lm}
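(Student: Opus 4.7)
The plan is to mimic the proof of the previous lemma (that effect algebra homomorphisms preserve $0$), using the distributivity axioms of an effect monoid in place of the homomorphism property, together with cancellation in the underlying effect algebra.

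First I would observe that $0 \ovee 0 = 0$, which follows immediately from the PCM axiom $x \ovee 0 = x$ with $x := 0$. Next, I would apply left distributivity: since multiplication is total, $x \cdot 0$ is defined, and the directed equality $x \cdot (0 \ovee 0) \pareq (x \cdot 0) \ovee (x \cdot 0)$ tells us the right-hand side is also defined and equal. Combining these gives
\[
x \cdot 0 \;=\; x \cdot (0 \ovee 0) \;=\; (x \cdot 0) \ovee (x \cdot 0).
\]
Since also $x \cdot 0 = (x \cdot 0) \ovee 0$, cancellation (cited from \cite{Dvurecenskij2000}, as in the previous proof) yields $x \cdot 0 = 0$. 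The identity $0 \cdot x = 0$ is obtained by the symmetric argument using right distributivity.

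I do not expect any real obstacle: the only subtlety is that the distributivity axioms are stated with directed equality $\pareq$ rather than Kleene equality $\simeq$, so one has to explicitly note that definedness on the left (which is automatic because $\cdot$ is total) propagates to the right before the equation can be used as an ordinary equality. Once that is noted, the argument is essentially the same three-line computation as in the preceding lemma.
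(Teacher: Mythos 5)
Your proof is correct and is essentially the paper's own argument: compute $x\cdot 0 = x\cdot(0\ovee 0) = (x\cdot 0)\ovee(x\cdot 0)$ via distributivity, compare with $(x\cdot 0)\ovee 0$, and conclude by cancellation, with the symmetric case for $0\cdot x$. Your remark about the directed equality $\pareq$ propagating definedness from the (total) left-hand side is a reasonable point that the paper leaves implicit.
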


\begin{proof}
We have
\begin{align*}
x \cdot (0 \ovee 0) & = x \cdot 0 \\
\therefore x \cdot 0 \ovee x \cdot 0 & = x \cdot 0 \\
& = x \cdot 0 \ovee 0 \\
\therefore x \cdot 0 & = 0
\end{align*}
by cancellation.  Similarly for $0 \cdot x$.
\end{proof}

\begin{df}[Effect Module]
An \emph{effect module} over an effect monoid $E$ is an effect algebra $A$ with a binary operation $\cdot : E \times A \rightarrow A$
called \emph{scalar multiplication} such that, for all $x, y, z \in E$:
\begin{itemize}
\item
$r \cdot (x \ovee y) \pareq (r \cdot x) \ovee (r \cdot y)$
\item
$(r \ovee s) \cdot x \pareq (r \cdot x) \ovee (s \cdot x)$
\item
$(r \cdot s) \cdot x = r \cdot (s \cdot x)$
\item
$1 \cdot x = x$
\end{itemize}
\end{df}

\begin{df}[Effect Module Homomorphism]
Let $A$ and $B$ be effect modules over $E$.  An \emph{effect module homomorphism} $\phi : A \rightarrow B$ is an effect algebra homomorphism such that, for all $r \in E$ and $x \in A$,
\[ \phi(r \cdot x) = r \cdot \phi(x) \enspace . \]
\end{df}

\subsubsection{Examples}

\begin{enumerate}
\item
For any Hilbert space $H$, the set of effects over $H$ forms an effect module over the effect monoid $[0,1]$, with $F \ovee G = F + G$ iff $F + G$ is an effect \cite{Jacobs2013}.
\item
Given a C$^*$-algebra $A$, the set of \emph{effects} in $A$ (positive elements below the unit) form an effect module over the real numbers $[0,1]$.
\end{enumerate}

\subsection{Convex Sets}

We describe the category of \emph{convex sets} over any effect monoid.  The \emph{states} of a quantum system will form a convex set over the effect monoid of probabilities.

\begin{df}
Given an effect monoid $E$, the \emph{distribution monad} $\mathcal{D}_E : \Set \rightarrow \Set$ is defined as follows.

\[ \mathcal{D}_E X = \{ \phi : X \rightarrow E : \supp \phi \mbox{ is finite}, \sum_{x \in X} \phi(x) \mbox{ exists and is equal to } 1 \} \]
where $\supp \phi = \{ x \in X : \phi(x) \neq 0 \}$.

For $f : X \rightarrow Y$,
\[ \mathcal{D}_E f(\phi)(y) = \sum_{f(x) = y} \phi(x) \qquad (\phi \in \mathcal{D}_E X, y \in Y) \enspace . \]

The unit $\eta_A : A \rightarrow \mathcal{D}_E A$ is defined by
\[ \eta_A(a)(a') = \begin{cases}
1 & \mbox{if } a = a' \\
0 & \mbox{if } a \neq a'
\end{cases} \]

The multiplication $\mu_A : \mathcal{D}_E^2 A \rightarrow \mathcal{D}_E A$ is defined by
\[ \mu_A(\Phi)(a) = \sum_{\phi \in \mathcal{D}_E A} (\Phi(\phi) \cdot \phi(a)) \enspace . \]
\end{df}

The category $\Conv{E}$ of \emph{convex sets} and \emph{affine functions} over $E$ is the Eilengberg-Moore category of $\mathcal{D}_E$.  A convex set may thus be thought of as a set $X$ together with a function mapping any finite tuple $\langle r_1, \ldots, r_n \rangle$ of elements of $M$ that sum to 1, and any tuple $\langle x_1, \ldots, x_n \rangle$ of elements of $X$, to an element
$r_1 x_1 + \cdots + r_n x_n$
of $X$.

\begin{theorem}
The distribution monad is a strong monad.  It is a commutative monad iff $E$ is commutative.
\end{theorem}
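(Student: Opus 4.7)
The plan is to exhibit the tensorial strength on $\mathcal{D}_E$ explicitly, verify the four coherence axioms of a strong monad, and then characterise commutativity by computing the two derived double-strengths.

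I would define $\mathrm{st}_{A,B} : A \times \mathcal{D}_E B \to \mathcal{D}_E(A \times B)$ by
\[ \mathrm{st}_{A,B}(a, \psi)(a', b) = \begin{cases} \psi(b) & \text{if } a = a' \\ 0 & \text{otherwise,} \end{cases} \]
and first check this lands in $\mathcal{D}_E(A \times B)$: the support lies in $\{a\} \times \supp \psi$, and the total mass is $\sum_b \psi(b) = 1$. Naturality of $\mathrm{st}$ in both arguments is immediate from the case analysis.

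Next I would verify the coherence axioms of a strong monad: compatibility of $\mathrm{st}$ with the left unitor, the associator, the unit $\eta_B$, and the multiplication $\mu_B$. Each case reduces to comparing the coefficient assigned to each point of the codomain, using only the effect-monoid axioms (distributivity of $\cdot$ over $\ovee$, associativity of $\cdot$, and $1 \cdot x = x$). The hard part will be the multiplication axiom, where the right-hand side involves a sum over $\sigma \in \mathcal{D}_E(A \times B)$; exploiting that $\psi \mapsto \mathrm{st}_{A,B}(a, \psi)$ is injective lets one reindex as a sum over $\psi \in \mathcal{D}_E B$, after which both sides collapse to $\sum_\psi \Psi(\psi) \cdot \psi(b)$ at $(a,b)$ and to $0$ elsewhere.

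For the commutativity claim I would compute the two double-strengths $d_l, d_r : \mathcal{D}_E A \times \mathcal{D}_E B \to \mathcal{D}_E(A \times B)$ built from $\mathrm{st}$ and its symmetric counterpart $\mathrm{st}'$ (obtained via the braiding of $\Set$) by the two standard composites $\mu \circ \mathcal{D}_E(\mathrm{st}) \circ \mathrm{st}'$ and $\mu \circ \mathcal{D}_E(\mathrm{st}') \circ \mathrm{st}$. A direct calculation unwinds these to
\[ d_l(\phi, \psi)(a, b) = \phi(a) \cdot \psi(b), \qquad d_r(\phi, \psi)(a, b) = \psi(b) \cdot \phi(a), \]
so if $E$ is commutative then $d_l = d_r$ at once. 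Conversely, if $d_l = d_r$ for all inputs, then given $r, s \in E$ take $A = B = \{0,1\}$ with $\phi(0) = r$, $\phi(1) = r^\bot$, $\psi(0) = s$, $\psi(1) = s^\bot$; comparing the $(0,0)$-coefficient forces $r \cdot s = s \cdot r$, so $E$ is commutative.
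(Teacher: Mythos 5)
Your proposal is correct and takes essentially the same route as the paper: the paper's proof consists solely of exhibiting the identical tensorial strength $t_{AB}(a,\phi)(a',b) = \phi(b)$ if $a = a'$ and $0$ otherwise, leaving all verifications to the reader. Your additional details --- the injectivity reindexing for the multiplication axiom, the computation $d_l(\phi,\psi)(a,b) = \phi(a)\cdot\psi(b)$ versus $d_r(\phi,\psi)(a,b) = \psi(b)\cdot\phi(a)$, and the two-point counterexample construction for the converse --- are all sound and simply fill in what the paper omits.
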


\begin{proof}
The tensorial strength $t_{AB} : A \times \mathcal{D}_E B \rightarrow \mathcal{D}_E (A \times B)$ is given by
\[ t_{AB}(a, \phi)(a',b) = \begin{cases}
\phi(b) & \mbox{if } a = a' \\
0 & \mbox{if } a \neq a'
\end{cases} \]
\end{proof}

\begin{corollary}
If $E$ is commutative, then $\Conv{E}$ is a symmetric monoidal category.
\end{corollary}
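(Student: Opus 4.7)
The plan is to invoke a classical result of Kock: for any commutative monad $T$ on a symmetric monoidal category $(\mathcal{C}, \otimes, I)$, the Eilenberg--Moore category $\mathcal{C}^T$ inherits a symmetric monoidal structure, and the free algebra functor $\mathcal{C} \to \mathcal{C}^T$ becomes strong symmetric monoidal. By the preceding theorem, commutativity of $E$ makes $\mathcal{D}_E$ a commutative monad on the cartesian symmetric monoidal category $(\Set, \times, 1)$, so $\Conv{E} = \Set^{\mathcal{D}_E}$ automatically carries a symmetric monoidal structure.

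To keep the proof self-contained, I would sketch Kock's construction in this concrete setting. Given convex sets (i.e.\ $\mathcal{D}_E$-algebras) $(A, \alpha)$ and $(B, \beta)$, the tensor $A \otimes B$ is defined as the coequalizer in $\Conv{E}$ of a parallel pair of $\mathcal{D}_E$-algebra maps $\mathcal{D}_E(\mathcal{D}_E A \times \mathcal{D}_E B) \rightrightarrows \mathcal{D}_E(A \times B)$ built from the given left strength $t_{AB}$ together with the right strength $t'_{AB}$ obtained from $t$ by conjugating with the symmetry on $\times$. One of the two parallel arrows uses $\mu \circ \mathcal{D}_E(t) \circ t'$, the other uses $\mu \circ \mathcal{D}_E(t') \circ t$; commutativity of $\mathcal{D}_E$ is precisely the statement that these agree after postcomposition with the algebra maps. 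The unit object is the free algebra on a singleton, $\mathcal{D}_E(1) \cong E$, and the symmetry $A \otimes B \to B \otimes A$ descends from the symmetry of $\times$ on $\Set$.

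The main obstacle is verifying that the required coequalizers exist in $\Conv{E}$ and that the resulting bifunctor satisfies the pentagon, triangle, and hexagon coherence conditions. Existence follows because $\Conv{E}$, being the Eilenberg--Moore category of a monad on the cocomplete category $\Set$, has all colimits that $\Set$ has and that are preserved by $\mathcal{D}_E$; the relevant reflexive coequalizers can be shown to fall in this class by a routine argument. Each coherence diagram ultimately reduces, after unwinding the coequalizer definition, to an equation between formal convex combinations in $\mathcal{D}_E$ that follows either from the monad axioms or from the commutativity equation $\mu \circ \mathcal{D}_E(t') \circ t = \mu \circ \mathcal{D}_E(t) \circ t'$ supplied by the previous theorem. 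Symmetry of the tensor and the hexagon condition are the steps that depend most delicately on commutativity of $E$, and so they are where I would concentrate the detailed verification.
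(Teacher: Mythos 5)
Your proposal is correct and takes essentially the same route as the paper, which simply cites Kock's theorem that the Eilenberg--Moore category of a commutative monad inherits a symmetric monoidal structure; you additionally sketch the coequalizer construction of the tensor, which the paper leaves to the reference.
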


\begin{proof}
See \cite{Kock1972}.
\end{proof}

The convex set $A \otimes B$ consists of all sums $r_1 (a_1, b_1) + \cdots + r_n (a_n, b_n) \ (r_1 \ovee \cdots \ovee r_n = 1, a_i \in A, b_i \in B)$,
quotiented by the appropriate equivalence relation.  An affine function $f : A \otimes B \rightarrow C$ in $\Conv{M}$ is determined by the values $f(a,b)$ for $a \in A$ and $b \in B$

\begin{theorem}
The hom-functors $\Conv{E}[-,E] \dashv \EMod{E}[-,E] : \Conv{E} \rightleftarrows \EMod{E}^{\mathrm{op}}$ form an adjunction.
\end{theorem}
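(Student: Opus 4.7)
The plan is to establish the adjunction via a currying argument, showing that both hom-sets are naturally isomorphic to a common set of ``bihomomorphisms'' $A \times B \to E$ that are affine in the first argument and effect module homomorphisms in the second.

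First, I would equip $\Conv{E}[A, E]$ with an effect module structure, with all operations defined pointwise: $(\phi \ovee \psi)(a) := \phi(a) \ovee \psi(a)$ (defined when the right-hand side is defined for every $a \in A$), $\phi^\bot(a) := \phi(a)^\bot$, and $(r \cdot \phi)(a) := r \cdot \phi(a)$. The nontrivial point is that each of these again yields an affine function into $E$, which follows because $\ovee$, $(-)^\bot$, and scalar multiplication commute with the convex combinations used in the $\mathcal{D}_E$-algebra structure on $E$ (this in turn is a direct consequence of the effect monoid/module axioms for $E$ as a convex set over itself). Symmetrically, I would equip $\EMod{E}[B, E]$ with a convex set structure, defining pointwise convex combinations $(\sum_i r_i \phi_i)(b) := \sum_i r_i \phi_i(b)$, and verifying that the result is again an effect module homomorphism: preservation of $\ovee$, $(-)^\bot$, and $E$-scalar multiplication all reduce, via the effect module axioms applied pointwise, to identities between iterated partial sums of scalars in $E$.

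Next, I would define the candidate bijection
\[ \Phi_{A,B} : \EMod{E}[B, \Conv{E}[A, E]] \longrightarrow \Conv{E}[A, \EMod{E}[B, E]] \]
by setting $\Phi_{A,B}(f)(a)(b) := f(b)(a)$. The core checks are that for each fixed $a \in A$ the map $b \mapsto f(b)(a)$ is an effect module homomorphism (immediate from $f$ being one and from the operations on $\Conv{E}[A, E]$ being defined pointwise) and that $a \mapsto \Phi_{A,B}(f)(a)$ is affine (which again follows by evaluating pointwise at each $b$ and using that $f(b) : A \to E$ is affine). The inverse is defined by the symmetric formula $\Phi^{-1}_{A,B}(g)(b)(a) := g(a)(b)$, and the two constructions are mutually inverse by construction.

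Finally, I would verify naturality of $\Phi$ in both $A$ (with respect to affine maps $A' \to A$) and $B$ (with respect to effect module homomorphisms $B \to B'$, noting the variance swap via the $\mathrm{op}$), which is a routine unwinding of definitions. The main obstacle will be step one: carefully checking that the pointwise operations on affine maps really land in affine maps, and that pointwise convex combinations of effect module homomorphisms really yield effect module homomorphisms, together with the side-condition bookkeeping for the partial operation $\ovee$. Once the pointwise structures are in place and shown to be well-defined, the adjunction itself is a formal currying argument.
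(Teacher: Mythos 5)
Your proposal cannot be checked against the paper's own argument, because the paper gives none: the proof is deferred entirely to a forthcoming reference (\cite{Jacobs}), with the special case $E=[0,1]$ credited to \cite{Jacobs2010}. That said, what you outline is the standard way to establish this dual adjunction: $E$ is a dualizing object living in both categories, the two hom-sets into $E$ carry pointwise structure, and the adjunction isomorphism $\EMod{E}[B,\Conv{E}[A,E]] \cong \Conv{E}[A,\EMod{E}[B,E]]$ is the swap $f \mapsto (a \mapsto (b \mapsto f(b)(a)))$. Your identification of the real work --- showing the pointwise operations land back in the right hom-sets --- is also where the genuine content lies. Two points deserve more care than your sketch gives them. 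First, commutativity of $E$ is actually needed: for $r\cdot\phi$ to be affine you must exchange $r\cdot(s_i\cdot x)$ with $s_i\cdot(r\cdot x)$, and likewise for the pointwise convex combination of module homomorphisms to preserve scalar multiplication; the paper only invokes the adjunction under the standing hypothesis that $E$ is a commutative effect monoid, so your proof should state this hypothesis explicitly. Second, the partiality bookkeeping for $\ovee$ is not purely cosmetic: you must check that the pointwise partial sum satisfies the effect algebra axioms up to Kleene equality (in particular the orthosupplement law against the constant function $1$, which you should verify is affine), and that the swap respects definedness in both directions --- if $b\perp b'$ in $B$ then $f(b)\perp f(b')$ pointwise, and conversely. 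With those caveats addressed, your currying argument is correct and complete in outline, and is almost certainly the same proof the cited reference supplies.
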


\begin{proof}
To appear in \cite{Jacobs}.  The special case $E = [0,1]$ was proved in \cite{Jacobs2010}.
\end{proof}

\section{Syntax and Rules of Deduction}

We begin with a system that represents a symmetric monoidal closed category with distributive coproducts, with an effect module of predicates over each object.

\newcommand{\lett}{\mathsf{let} \ }
\newcommand{\inn}{\ \mathsf{in} \ }
\newcommand{\of}{\ \mathsf{of} \ }
\newcommand{\case}{\mathsf{case} \ }
\newcommand{\measure}{\mathsf{measure} \ }
\newcommand{\eff}{\ \mathrm{eff}}
\[ \begin{array}{lrcl}
\mbox{Type} & A & ::= & A \otimes A \mid I \mid A + B \\
\mbox{Term} & M & ::= & x \mid M \otimes M \mid \lett x \otimes x = M \inn M \mid \langle \rangle \mid \\
& & & \inl{M} \mid \inr{M} \mid \\
& & & (\case M \of \inl{x} \mapsto M \mid \inr{x} \mapsto M) \mid \\
& & & (\measure \phi \mapsto M \mid \cdots \mid \phi \mapsto M) \\
\mbox{Effect} & \phi & ::= & 0 \mid \phi \ovee \phi \mid \phi^\bot \mid \phi \cdot \psi \mid (\case M \of \inl{x} \mapsto \phi \mid \inr{x} \mapsto \phi) \\
\mbox{Context} & \Gamma & ::= & \langle \rangle \mid \Gamma, x : A \\
\mbox{Judgement} & \mathcal{J} & ::= & \Gamma \vdash M : A \mid \Gamma \vdash M = N : A \mid \Gamma \vdash \phi \eff \mid \Gamma \vdash \phi \leq \psi
\end{array} \]

The intuition is as follows:
\begin{itemize}
\item
Each type represents a state space for a quantum computer at some stage of a calculation.  For example, the type $(\qbit \otimes \qbit) + (\qbit \otimes \qbit \otimes \qbit)$ represents a computer that has either two or three qubits in memory (depending on decisions earlier in the program).  (The type $\qbit$ will be introduced in Section \ref{section:qubits}.)

The type $I$ represents a singleton data type.  A term of type $A \otimes B$ is a pair consisting of a term of type $A$ and a term of type $B$ (possibly entangled).  A term of type $A + B$ is either a term of type $A$ or a term of type $B$ (with 'or' understood here classically).
\item
A term $M$ such that $\Gamma \vdash M : A$ represents a quantum algorithm that takes inputs as given by the context $\Gamma$, and returns an output of type $A$.

If the judgement $\Gamma \vdash M = N : A$ is derivable, then the algorithms $M$ and $N$ always produce the same output state given the same input state.
\item
An effect in context $\Gamma$ represents an observable measurement that may be performed on the system denoted by $\Gamma$.

The effect $0$ is the always false effect.  The effect $\phi \ovee \psi$ is the sum of $\phi$ and $\psi$, which may only be formed if $\phi$ and $\psi$ are orthogonal.  The effect $\phi^\bot$ is the orthocomplement of $\phi$.
\end{itemize}

We write
\begin{align*}
1 & \mbox{ for } 0^\bot \\
\ovee_{i=1}^n \phi_i & \mbox{ for } ((\cdots(\phi_1 \ovee \phi_2) \ovee \cdots) \ovee \phi_n \\
\measure_{i=1}^n \phi_i \mapsto M_i & \mbox{ for } \measure \phi_1 \mapsto M_1 \mid \cdots \mid \phi_n \mapsto M_n \\
\phi \perp \psi & \mbox{ for } \phi \leq \psi^\bot
\end{align*}
We write $\Gamma \vdash \phi \equiv \psi$ for the two judgements $\Gamma \vdash \phi \leq \psi$ and $\Gamma \vdash \psi \leq \phi$.

\newcommand{\Rovee}{($\ovee$)\xspace}
The rules of deduction are as follows.  

\paragraph{Note}
Note in particular the rule \Rovee.  For $\phi \ovee \psi$ to be a well-formed effect in context $\Gamma$, we must first have a derivation of $\Gamma \vdash \phi \perp \psi$, i.e.~$\Gamma \vdash \phi \leq \psi^\bot$.

\pagebreak

\paragraph{Structural Rule}

\newcommand{\Rexch}{(exchange)\xspace}
\begin{prooftree}
\AxiomC{$\Gamma, x : A, y : B, \Delta \vdash \mathcal{J}$}
\LeftLabel{(exch)}
\UnaryInfC{$\Gamma, y : B, x : A, \Delta \vdash \mathcal{J}$}
\end{prooftree}

\paragraph{Term Formation}

\newcommand{\Rvar}{(var)\xspace}
\begin{prooftree}
\AxiomC{}
\LeftLabel{\Rvar}
\RightLabel{($x : A \in \Gamma$)}
\UnaryInfC{$\Gamma \vdash x : A$}
\end{prooftree}

\newcommand{\Rotimes}{($\otimes$)\xspace}
\begin{prooftree}
\AxiomC{$\Gamma \vdash M : A$}
\AxiomC{$\Delta \vdash N : B$}
\LeftLabel{\Rotimes}
\BinaryInfC{$\Gamma, \Delta \vdash M \otimes N : A \otimes B$}
\end{prooftree}

\newcommand{\Rlet}{(let)\xspace}
\begin{prooftree}
\AxiomC{$\Gamma \vdash M : A \otimes B$}
\AxiomC{$\Delta, x : A, y : B \vdash N : C$}
\LeftLabel{\Rlet}
\BinaryInfC{$\Gamma, \Delta \vdash \lett x \otimes y = M \inn N : C$}
\end{prooftree}

\newcommand{\Runit}{($\langle \rangle$)\xspace}
\begin{prooftree}
\AxiomC{}
\LeftLabel{\Runit}
\UnaryInfC{$\Gamma \vdash \langle \rangle : I$}
\end{prooftree}

\newcommand{\Rinl}{(inl)\xspace}
\newcommand{\Rinr}{(inr)\xspace}
\begin{center}
\AxiomC{$\Gamma \vdash M : A$}
\LeftLabel{\Rinl}
\UnaryInfC{$\Gamma \vdash \inl{M} : A + B$}
\DisplayProof
\qquad
\AxiomC{$\Gamma \vdash M : B$}
\LeftLabel{\Rinr}
\UnaryInfC{$\Gamma \vdash \inr{M} : A + B$}
\DisplayProof
\end{center}

\newcommand{\Rcase}{(case)\xspace}
\begin{prooftree}
\AxiomC{$\Gamma \vdash M : A + B$}
\AxiomC{$\Delta, x : A \vdash N : C$}
\AxiomC{$\Delta, y : B \vdash P : C$}
\LeftLabel{\Rcase}
\TrinaryInfC{$\Gamma, \Delta \vdash \case M \of \inl{x} \mapsto N \mid \inr{y} \mapsto P : C$}
\end{prooftree}

\newcommand{\Rmeasure}{(measure)\xspace}
\begin{prooftree}
\AxiomC{$\Gamma \vdash 1 \leq \ovee_{i=1}^n \phi_i$}
\AxiomC{$\Delta \vdash M_i : A \qquad (1 \leq i \leq n)$}
\LeftLabel{\Rmeasure}
\BinaryInfC{$\Gamma, \Delta \vdash \measure_{i=1}^n \phi_i \mapsto M_i: A$}
\end{prooftree}

\paragraph{Equality of Terms}

\newcommand{\Rref}{(ref)\xspace}
\newcommand{\Rsym}{(sym)\xspace}
\begin{center}
\AxiomC{$\Gamma \vdash M : A$}
\LeftLabel{\Rref}
\UnaryInfC{$\Gamma \vdash M = M : A$}
\DisplayProof
\qquad
\AxiomC{$\Gamma \vdash M = N : A$}
\LeftLabel{\Rsym}
\UnaryInfC{$\Gamma \vdash N = M : A$}
\DisplayProof
\end{center}

\newcommand{\Rtrans}{(trans)\xspace}
\begin{prooftree}
\AxiomC{$\Gamma \vdash M = N : A$}
\AxiomC{$\Gamma \vdash N = P : A$}
\LeftLabel{(trans)}
\BinaryInfC{$\Gamma \vdash M = P : A$}
\end{prooftree}

\subparagraph{Congruences}

\newcommand{\Rtimeseq}{($\otimes$-eq)\xspace}
\begin{prooftree}
\AxiomC{$\Gamma \vdash M = M' : A$}
\AxiomC{$\Delta \vdash N = N' : B$}
\LeftLabel{\Rtimeseq}
\BinaryInfC{$\Gamma, \Delta \vdash M \otimes N = M' \otimes N' : A
\otimes B$}
\end{prooftree}

\newcommand{\Rleteq}{(let-eq)\xspace}
\begin{prooftree}
\AxiomC{$\Gamma \vdash M = M' : A \otimes B$}
\AxiomC{$\Delta, x : A, y : B \vdash N = N' : C$}
\LeftLabel{\Rleteq}
\BinaryInfC{$\Gamma, \Delta \vdash (\lett x \otimes y = M \inn N) = (\lett x \otimes y = M' \inn N') : C$}
\end{prooftree}

\newcommand{\Rinleq}{(inl-eq)\xspace}
\newcommand{\Rinreq}{(inr-eq)\xspace}
\begin{center}
\AxiomC{$\Gamma \vdash M = N : A$}
\LeftLabel{\Rinleq}
\UnaryInfC{$\Gamma \vdash \inl{M} = \inl{N} : A + B$}
\DisplayProof
\qquad
\AxiomC{$\Gamma \vdash M = N : B$}
\LeftLabel{\Rinreq}
\UnaryInfC{$\Gamma \vdash \inr{M} = \inr{N} : A + B$}
\DisplayProof
\end{center}

\newcommand{\Rcaseeq}{(case-eq)\xspace}
\begin{prooftree}
\AxiomC{$\Gamma \vdash M = M' : A + B$}
\AxiomC{$\Delta, x : A \vdash N = N' : C$}
\AxiomC{$\Delta, y : B \vdash P = P' : C$}
\LeftLabel{\Rcaseeq}
\TrinaryInfC{$
\begin{array}{c}
\Gamma, \Delta \vdash (\case M \of \inl{x} \mapsto N \mid \inr{y} \mapsto P) \\
= (\case M' \of \inl{x} \mapsto N' \mid \inr{y} \mapsto P') : C
\end{array}
$}
\end{prooftree}

\newcommand{\Rmeaseq}{(measure-eq)\xspace}
\begin{center}
\AxiomC{$\Gamma \vdash 1 \leq \ovee_{i=1}^n \phi_i$}
\AxiomC{$\Gamma \vdash \phi_i \equiv \psi_i \quad (1 \leq i \leq n)$}
\noLine
\BinaryInfC{$\Delta \vdash M_i = N_i : A \quad (1 \leq i \leq n)$}
\LeftLabel{\Rmeaseq}
\UnaryInfC{$\Gamma, \Delta \vdash (\measure_{i=1}^n \phi_i \mapsto M_i) = (\measure_{i=1}^n \psi_i \mapsto N_i) : A$}
\DisplayProof
\end{center}

\subparagraph{$\beta$-conversions}

\newcommand{\Rbetaotimes}{($\beta \otimes$)\xspace}
\begin{prooftree}
\AxiomC{$\Gamma \vdash M : A$}
\AxiomC{$\Delta \vdash N : B$}
\AxiomC{$\Theta, x : A, y : B \vdash P : C$}
\LeftLabel{\Rbetaotimes}
\TrinaryInfC{$\Gamma, \Delta, \Theta \vdash (\lett x \otimes y = M \otimes N \inn P) =
[M / x, N / y]P : C$}
\end{prooftree}

\newcommand{\Rbetaplusone}{($\beta +_1$)\xspace}
\begin{prooftree}
\AxiomC{$\Gamma \vdash M : A$}
\AxiomC{$\Delta, x : A \vdash N : C$}
\AxiomC{$\Delta, y : B \vdash P : C$}
\LeftLabel{\Rbetaplusone}
\TrinaryInfC{$\Gamma, \Delta \vdash \case \inl{M} \of \inl{x} \mapsto N \mid \inr{y} \mapsto P = [M/x]N : C$}
\end{prooftree}

\newcommand{\Rbetaplustwo}{($\beta +_2$)\xspace}
\begin{prooftree}
\AxiomC{$\Gamma \vdash M : B$}
\AxiomC{$\Delta, x : A \vdash N : C$}
\AxiomC{$\Delta, y : B \vdash P : C$}+
\LeftLabel{\Rbetaplustwo}
\TrinaryInfC{$\Gamma, \Delta \vdash \case \inr{M} \of \inl{x} \mapsto N \mid \inr{y} \mapsto P = [M/y]P : C$}
\end{prooftree}

\subparagraph{$\eta$-conversions}

\newcommand{\Retaotimes}{($\eta \otimes$)\xspace}
\newcommand{\RetaI}{($\eta I$)\xspace}
\begin{center}
\AxiomC{$\Gamma \vdash M : A \otimes B$}
\LeftLabel{\Retaotimes}
\UnaryInfC{$\Gamma \vdash M = \lett x \otimes y = M \inn x \otimes y : A \otimes B$}
\DisplayProof
\qquad
\AxiomC{$\Gamma \vdash M : I$}
\LeftLabel{\RetaI}
\UnaryInfC{$\Gamma \vdash M = \langle\rangle  : I$}
\DisplayProof
\end{center}

\newcommand{\Retaplus}{($\eta +$)\xspace}
\begin{prooftree}
\AxiomC{$\Gamma \vdash M : A + B$}
\LeftLabel{\Retaplus}
\UnaryInfC{$\Gamma \vdash M = \case M \of \inl{x} \mapsto \inl{x} \mid \inr{y} \mapsto \inr{y} : A + B$}
\end{prooftree}

\subparagraph{Commuting Conversions}

\newcommand{\Rletlet}{(let-commute)\xspace}
\begin{prooftree}
\AxiomC{$\Gamma \vdash M : A \otimes B$}
\AxiomC{$\Delta, x : A, y : B \vdash N : C \otimes D$}
\noLine
\BinaryInfC{$\Theta, t : C, u : D \vdash P : E$}
\LeftLabel{\Rletlet}
\UnaryInfC{$
\begin{array}{c}
\Gamma, \Delta, \Theta \vdash (\lett x \otimes y = M \inn \lett t \otimes u = N \inn P) \\
= (\lett t \otimes u = \lett x \otimes y = M \inn N \inn P) : E
\end{array}
$}
\end{prooftree}

\newcommand{\Rletcase}{(let-case)\xspace}
\begin{prooftree}
\AxiomC{$\Gamma \vdash M : A + B$}
\AxiomC{$\Delta, x : A \vdash N : C \otimes D$}
\noLine
\BinaryInfC{$\Delta, y : B \vdash P : C \otimes D \qquad
\Theta, z : C, t : D \vdash Q : E$}
\LeftLabel{\Rletcase}
\UnaryInfC{$
\begin{array}{c}
\Gamma, \Delta, \Theta \vdash (\lett z \otimes t = \case M \of \inl{x} \mapsto N \mid \inr{y} \mapsto P \inn Q) \\
= \mathsf{case}\ M\ \mathsf{of}\ \inl{x} \mapsto \lett z \otimes t = N \inn Q \mid \\
\inr{y} \mapsto \lett z \otimes t = P \inn Q : E
\end{array}
$}
\end{prooftree}

\newcommand{\Rletotimes}{(let-$\otimes$)\xspace}
\begin{prooftree}
\AxiomC{$\Gamma \vdash M : A \otimes B$}
\AxiomC{$\Delta, x : A, y : B \vdash N : C$}
\AxiomC{$\Theta \vdash P : D$}
\LeftLabel{\Rletotimes}
\TrinaryInfC{$\Gamma, \Delta, \Theta \vdash (\lett x \otimes y = M \inn N) \otimes P = \lett x \otimes y = M \inn (N \otimes P)$}
\end{prooftree}

\newcommand{\Rcasecase}{(case-commute)\xspace}
\begin{prooftree}
\AxiomC{$\Gamma \vdash M : A + B$}
\AxiomC{$\Delta, x : A \vdash N : C + D$}
\AxiomC{$\Delta, y : B \vdash P : C + D$}
\noLine
\TrinaryInfC{$\Theta, z : C \vdash Q : E \qquad \qquad \Theta, t : D \vdash R : E$}
\LeftLabel{\Rcasecase}
\UnaryInfC{$
\begin{array}{c}
\Gamma, \Delta, \Theta \vdash 
\case M \of \inl{x} \mapsto (\case N \of \inl{z} \mapsto Q \mid \inr{t} \mapsto R) \mid \\
\inr{y} \mapsto (\case P \of \inl{z} \mapsto Q \mid \inr{t} \mapsto R) \mid \\
= \case (\case M \of \inl{x} \mapsto N \mid \inr{y} \mapsto P) \of \inl{z} \mapsto Q \mid \inr{t} \mapsto R : E
\end{array}
$}
\end{prooftree}

\newcommand{\Rcaseotimes}{(case-$\otimes$)\xspace}
\begin{prooftree}
\AxiomC{$\Gamma \vdash Q : A + B$}
\AxiomC{$\Delta, a : A \vdash M : C$}
\AxiomC{$\Delta, b : B \vdash N : C$}
\AxiomC{$\Theta \vdash P : D$}
\LeftLabel{\Rcaseotimes}
\QuaternaryInfC{$
\begin{array}{c}
\Gamma, \Delta, \Theta \vdash (\case Q \of \inl{a} \mapsto M \mid \inr{b} \mapsto N) \otimes P \\
= \case Q \of \inl{a} \mapsto M \otimes P \mid \inr{b} \mapsto N \otimes P : C \otimes D
\end{array}
$}
\end{prooftree}

\subparagraph{Rules for Measurement}

\newcommand{\Rmeasureperm}{(measure-perm)\xspace}
\begin{prooftree}
\AxiomC{$\Gamma \vdash 1 \leq \ovee_{i=1}^n \phi_i$}
\AxiomC{$\Delta \vdash M_x : A \quad (1 \leq i \leq n)$}
\LeftLabel{\Rmeasureperm}
\RightLabel{($p$ a permutation of $\{1, \ldots, n\}$)}
\BinaryInfC{$
\begin{array}{c}
\Gamma, \Delta \vdash (\measure_{i=1}^n \phi_i \mapsto M_i) \\
=(\measure_{i=1}^n \phi_{p(i)} \mapsto M_{p(i)}) : A
\end{array}
$}
\end{prooftree}

\newcommand{\Rmeasurezero}{(measure-0)\xspace}
\begin{prooftree}
\AxiomC{$\Gamma \vdash 1 \leq \ovee_{i=1}^n \phi_i$}
\noLine
\UnaryInfC{$\Delta \vdash M_i : A \qquad (1 \leq i \leq n+1)$}
\LeftLabel{\Rmeasurezero}
\UnaryInfC{$
\begin{array}{c}
\Gamma, \Delta \vdash (\measure \phi_1 \mapsto M_1 \mid \cdots \mid \phi_n \mapsto M_n \mid 0 \mapsto M_{n+1}) \\
= \measure \phi_1 \mapsto M_1 \mid \cdots \mid \phi_n \mapsto M_n : A
\end{array}
$}
\end{prooftree}

\newcommand{\Rmeasureone}{(measure-1)\xspace}
\begin{prooftree}
\AxiomC{$\Gamma \vdash M : A$}
\LeftLabel{\Rmeasureone}
\UnaryInfC{$\Gamma \vdash (\mathsf{measure}\  1 \mapsto M) = M : A$}
\end{prooftree}

\newcommand{\Rmeasureplus}{(measure-plus)\xspace}
\begin{prooftree}
\AxiomC{$\vdash 1 \leq \phi \ovee \psi \ovee \chi_1 \ovee \cdots \ovee \chi_n$}
\AxiomC{$\Gamma \vdash M : A$}
\AxiomC{$\Gamma \vdash P_1 : A \qquad \cdots \qquad \Gamma \vdash P_n : A$}
\LeftLabel{\Rmeasureplus}
\TrinaryInfC{$
\begin{array}{c}
\Gamma \vdash (\mathsf{measure}\ \phi \ovee \psi \mapsto M \mid \chi_1 \mapsto P_1 \mid \cdots \mid \chi_n \mapsto P_n) \\
= (\mathsf{measure}\ \phi \mapsto M \mid \psi \mapsto M \mid \chi_1 \mapsto P_1 \mid \cdots \mid \chi_n \mapsto P_n)
\end{array}
$}
\end{prooftree}

\newcommand{\Rmeasurecase}{(measure-case)\xspace}
\begin{prooftree}
\AxiomC{$\Gamma, x : A \vdash 1 \leq \ovee_{i=1}^n \phi_i$}
\AxiomC{$\Gamma, y : B \vdash 1 \leq \ovee_{i=1}^n \psi_i$}
\noLine
\BinaryInfC{$\Delta \vdash M : A + B \qquad \Theta \vdash N_i : C \quad (1 \leq i \leq n)$}
\LeftLabel{\Rmeasurecase}
\UnaryInfC{$
\begin{array}{c}
\Gamma, \Delta, \Theta \vdash
\measure_{i=1}^n (\case M \of \inl{x} \mapsto \phi_i \mid \inr{y} \mapsto \psi_i) \mapsto N_i \\
= \case M \of \inl{x} \mapsto (\measure_{i=1}^n \phi_i \mapsto N_i) \mid \\
\inr{y} \mapsto (\measure_{i=1}^n \psi_i \mapsto N_i)
\end{array}
$}
\end{prooftree}

\paragraph{Effect Formation}

\newcommand{\Rzero}{(0)\xspace}
\newcommand{\Rbot}{($\bot$)\xspace}
\begin{center}
\AxiomC{}
\LeftLabel{\Rzero}
\UnaryInfC{$\Gamma \vdash 0 \eff$}
\DisplayProof
\qquad
\AxiomC{$\Gamma \vdash \phi \eff$}
\LeftLabel{\Rbot}
\UnaryInfC{$\Gamma \vdash \phi^\bot \eff$}
\DisplayProof
\qquad
\AxiomC{$\Gamma \vdash \phi \perp \psi$}
\LeftLabel{\Rovee}
\UnaryInfC{$\Gamma \vdash \phi \ovee \psi \eff$}
\DisplayProof
\end{center}

\begin{prooftree}
\AxiomC{$\vdash \phi \eff$}
\AxiomC{$\Gamma \vdash \psi \eff$}
\LeftLabel{(mult)}
\BinaryInfC{$\Gamma \vdash \phi \cdot \psi \eff$}
\end{prooftree}

\begin{prooftree}
\AxiomC{$\Gamma, x : A \vdash \phi \eff$}
\AxiomC{$\Gamma, y : B \vdash \psi \eff$}
\AxiomC{$\Delta \vdash M : A + B$}
\LeftLabel{(case)}
\TrinaryInfC{$\Gamma, \Delta \vdash \case M \of \inl{x} \mapsto \phi \mid \inr{y} \mapsto \psi \eff$}
\end{prooftree}

\paragraph{Derivability}

\newcommand{\Rleqref}{($\leq$-ref)\xspace}
\newcommand{\Rleqtrans}{($\leq$-trans)\xspace}
\begin{center}
\AxiomC{$\Gamma \vdash \phi \eff$}
\LeftLabel{\Rleqref}
\UnaryInfC{$\Gamma \vdash \phi \leq \phi$}
\DisplayProof
\qquad
\AxiomC{$\Gamma \vdash \phi \leq \psi$}
\AxiomC{$\Gamma \vdash \psi \leq \chi$}
\LeftLabel{\Rleqtrans}
\BinaryInfC{$\Gamma \vdash \phi \leq \chi$}
\DisplayProof
\end{center}

\newcommand{\Rzeroleq}{(0-$\leq$)\xspace}
\begin{prooftree}
\AxiomC{$\Gamma \vdash \phi \eff$}
\LeftLabel{\Rzeroleq}
\UnaryInfC{$\Gamma \vdash 0 \leq \phi$}
\end{prooftree}

\newcommand{\Rbotanti}{($\bot$-antitone)\xspace}
\newcommand{\Rprebotbot}{($\bot \bot_0$)\xspace}
\begin{center}
\AxiomC{$\Gamma \vdash \phi \leq \psi$}
\LeftLabel{\Rbotanti}
\UnaryInfC{$\Gamma \vdash \psi^\bot \leq \phi^\bot$}
\DisplayProof
\qquad
\AxiomC{$\Gamma \vdash \phi \eff$}
\LeftLabel{\Rprebotbot}
\UnaryInfC{$\Gamma \vdash \phi \leq \phi^{\bot \bot}$}
\DisplayProof
\end{center}

\newcommand{\Rleqovee}{($\leq$-$\ovee$)\xspace}
\newcommand{\Roveemono}{($\ovee$-mono)\xspace}
\begin{center}
\AxiomC{$\Gamma \vdash \phi \perp \psi$}
\LeftLabel{\Rleqovee}
\UnaryInfC{$\Gamma \vdash \phi \leq \phi \ovee \psi$}
\DisplayProof
\qquad
\AxiomC{$\Gamma \vdash \phi \leq \psi$}
\AxiomC{$\Gamma \vdash \psi \leq \chi^\bot$}
\LeftLabel{\Roveemono}
\BinaryInfC{$\Gamma \vdash \phi \ovee \chi \leq \psi \ovee \chi$}
\DisplayProof
\end{center}

\newcommand{\Roveeprecomm}{($\ovee$-comm$_\leq$)\xspace}
\newcommand{\Rperprot}{($\perp$-rotate)\xspace}
\begin{center}
\AxiomC{$\Gamma \vdash \phi \perp \psi$}
\LeftLabel{\Roveeprecomm}
\UnaryInfC{$\Gamma \vdash \phi \ovee \psi \leq \psi \ovee \phi$}
\DisplayProof
\qquad
\AxiomC{$\Gamma \vdash \phi \ovee \psi \perp \chi$}
\LeftLabel{\Rperprot}
\UnaryInfC{$\Gamma \vdash \psi \ovee \chi \perp \phi$}
\DisplayProof
\end{center}

\newcommand{\Roveepreassoc}{($\ovee$-assoc$_\leq$)\xspace}
\newcommand{\Roveezero}{($\ovee$-0)\xspace}
\begin{center}
\AxiomC{$\Gamma \vdash \phi \ovee \psi \perp \chi$}
\LeftLabel{\Roveepreassoc}
\UnaryInfC{$\Gamma \vdash \phi \ovee (\psi \ovee \chi) \leq (\phi \ovee \psi) \ovee \chi$}
\DisplayProof
\qquad
\AxiomC{$\Gamma \vdash \phi \eff$}
\LeftLabel{\Roveezero}
\UnaryInfC{$\Gamma \vdash \phi \ovee 0 \leq \phi$}
\DisplayProof
\end{center}

\newcommand{\Rortho}{(ortho$_1$)}
\newcommand{\Rorthotwo}{(ortho$_2$)}
\begin{center}
\AxiomC{$\Gamma \vdash 1 \leq \phi \ovee \psi$}
\LeftLabel{\Rortho}
\UnaryInfC{$\Gamma \vdash \psi^\bot \leq \phi$}
\DisplayProof
\qquad
\AxiomC{$\Gamma \vdash \phi \eff$}
\LeftLabel{\Rorthotwo}
\UnaryInfC{$\Gamma \vdash 1 \leq \phi \ovee \phi^\bot$}
\DisplayProof
\end{center}

\begin{center}
\AxiomC{$\vdash \phi \perp \psi$}
\AxiomC{$\Gamma \vdash \chi \eff$}
\LeftLabel{(dist$_L$)}
\BinaryInfC{$
\begin{array}{c}
\Gamma \vdash \phi \cdot \chi \perp \psi \cdot \chi \\
\Gamma \vdash (\phi \ovee \psi) \cdot \chi \equiv \phi \cdot \chi \ovee \psi \cdot \chi
\end{array}
$}
\DisplayProof
\qquad
\AxiomC{$\vdash \phi \eff$}
\AxiomC{$\Gamma \vdash \psi \perp \chi$}
\LeftLabel{(dist$_R$)}
\BinaryInfC{$
\begin{array}{c}
\Gamma \vdash \phi \cdot \psi \perp \phi \cdot \chi\\
\Gamma \vdash \phi \cdot (\psi \ovee \chi) \equiv \phi \cdot \psi \ovee \phi \cdot \chi
\end{array}
$}
\DisplayProof
\end{center}

\begin{center}
\AxiomC{$\Gamma \vdash \phi \eff$}
\LeftLabel{(unit$_L$)}
\UnaryInfC{$\Gamma \vdash 1 \cdot \phi \equiv \phi$}
\DisplayProof
\qquad
\AxiomC{$\vdash \phi \eff$}
\LeftLabel{(unit$_R$)}
\UnaryInfC{$\Gamma \vdash \phi \cdot 1 \equiv \phi$}
\DisplayProof
\qquad
\AxiomC{$\vdash \phi \eff$}
\AxiomC{$\vdash \psi \eff$}
\AxiomC{$\Gamma \vdash \chi \eff$}
\LeftLabel{(assoc)}
\TrinaryInfC{$\Gamma \vdash \phi \cdot (\psi \cdot \chi) \equiv (\phi \cdot \psi) \cdot \chi$}
\DisplayProof
\end{center}

\begin{prooftree}
\AxiomC{$\vdash \phi \eff$}
\AxiomC{$\vdash \psi \eff$}
\LeftLabel{(comm)}
\BinaryInfC{$\vdash \phi \cdot \psi \equiv \psi \cdot \phi$}
\end{prooftree}

\newcommand{\Rcasecong}{(case-cong)\xspace}
\begin{center}
\AxiomC{$\Gamma, x : A \vdash \phi \eff$}
\AxiomC{$\Gamma, y : B \vdash \psi \eff$}
\AxiomC{$\Delta \vdash M = N : A + B$}
\LeftLabel{\Rcasecong}
\TrinaryInfC{$\Gamma, \Delta \vdash \case M \of \inl{x} \mapsto \phi \mid \inr{y} \mapsto \psi \equiv \case N \of \inl{x} \mapsto \phi \mid \inr{y} \mapsto \psi$}
\DisplayProof
\end{center}

\newcommand{\Rcasemono}{(case-mono)}
\begin{center}
\AxiomC{$\Gamma, x : A \vdash \phi \leq \phi'$}
\AxiomC{$\Gamma, y : B \vdash \psi \leq \psi'$}
\AxiomC{$\Delta \vdash M : A + B$}
\LeftLabel{\Rcasemono}
\TrinaryInfC{$\Gamma, \Delta \vdash \case M \of \inl{x} \mapsto \phi \mid \inr{y} \mapsto \psi \leq \case M \of \inr{x} \mapsto \phi' \mid \inr{y} \mapsto \psi'$}
\DisplayProof
\end{center}

\newcommand{\Rbetaoneeff}{($\beta +_1$-eff)\xspace}
\begin{center}
\AxiomC{$\Gamma, x : A \vdash \phi \eff$}
\AxiomC{$\Gamma, y : B \vdash \psi \eff$}
\AxiomC{$\Delta \vdash M : A$}
\LeftLabel{\Rbetaoneeff}
\TrinaryInfC{$\Gamma, \Delta \vdash \case \inl{M} \of \inl{x} \mapsto \phi \mid \inr{y} \mapsto \psi \equiv [M/x]\phi$}
\DisplayProof
\end{center}

\newcommand{\Rbetatwoeff}{($\beta +_2$-eff)\xspace}
\begin{center}
\AxiomC{$\Gamma, x : A \vdash \phi \eff$}
\AxiomC{$\Gamma, y : B \vdash \psi \eff$}
\AxiomC{$\Delta \vdash M : B$}
\LeftLabel{\Rbetatwoeff}
\TrinaryInfC{$\Gamma, \Delta \vdash \case \inr{M} \of \inl{x} \mapsto \phi \mid \inr{y} \mapsto \psi \equiv [M/x]\psi$}
\DisplayProof
\end{center}

\newcommand{\Rcaseeta}{($\eta+$-eff)\xspace}
\begin{center}
\AxiomC{$\Gamma, z : A + B \vdash \phi \eff$}
\LeftLabel{\Rcaseeta}
\UnaryInfC{$\Gamma, z : A + B \vdash \phi \equiv \case z \of \inl{x} \mapsto [\inl{x}/z]\phi \mid \inr{y} \mapsto [\inr{y} / z] \phi$}
\DisplayProof
\end{center}

\newcommand{\Rcaseovee}{(case-$\ovee$)\xspace}
\begin{center}
\AxiomC{$\Gamma, x : A \vdash \phi \perp \phi'$}
\AxiomC{$\Gamma, y : B \vdash \psi \perp \psi'$}
\AxiomC{$\Delta \vdash M : A + B$}
\LeftLabel{\Rcaseovee}
\TrinaryInfC{$
\begin{array}{c}
\Gamma, \Delta \vdash \case M \of \inl{x} \mapsto (\phi \ovee \phi') \mid \inr{y} \mapsto (\psi \ovee \psi') \\
\equiv
(\case M \of \inl{x} \mapsto \phi \mid \inr{y} \mapsto \psi) \ovee (\case M \of \inl{x} \mapsto \phi' \mid \inr{y} \mapsto \psi')
\end{array}
$}
\DisplayProof
\end{center}

\newcommand{\Rcasebot}{(case-$\bot$)\xspace}
\begin{center}
\AxiomC{$\Gamma, x : A \vdash \phi \eff$}
\AxiomC{$\Gamma, y : B \vdash \psi \eff$}
\AxiomC{$\Delta \vdash M : A + B$}
\LeftLabel{\Rcasebot}
\TrinaryInfC{$\Gamma, \Delta \vdash \case M \of \inl{x} \mapsto \phi^\bot \mid \inr{y} \mapsto \psi^\bot \equiv
(\case M \of \inl{x} \mapsto \phi \mid \inr{y} \mapsto \psi)^\bot$}
\DisplayProof
\end{center}

\begin{center}
\AxiomC{$\Gamma \vdash M : A + B$}
\AxiomC{$\Delta, x : A \vdash \phi \leq \chi$}
\AxiomC{$\Delta, y : B \vdash \psi \leq \chi$}
\LeftLabel{(case-$\leq$)}
\TrinaryInfC{$\Gamma, \Delta \vdash (\case M \of \inl{x} \mapsto \phi \mid \inr{y} \mapsto \psi) \leq \chi$}
\DisplayProof
\end{center}

\newcommand{\Rcasemult}{(case-times)\xspace}
\begin{center}
\AxiomC{$\Gamma, x : A \vdash \phi \eff$}
\AxiomC{$\Gamma, y : B \vdash \psi \eff$}
\AxiomC{$\Delta \vdash M : A + B$}
\AxiomC{$\vdash \chi \eff$}
\LeftLabel{\Rcasemult}
\QuaternaryInfC{$
\begin{array}{c}
\Gamma, \Delta \vdash (\case M \of \inl{x} \mapsto \chi \cdot \phi \mid \inr{y} \mapsto \chi \cdot \psi) \\
\equiv
\chi \cdot \case M \of \inl{x} \mapsto \phi \mid \inr{y} \mapsto \psi
\end{array}
$}
\DisplayProof
\end{center}

\subsection{Metatheorems}
\label{section:metatheorems}

We can prove the following properties, which show that the typing system is well behaved.

\begin{lemma}$ $
\begin{enumerate}
\item
\textbf{Substitution}
If $\Gamma \vdash M : A$ and $\Delta, x : A, \Delta' \vdash \mathcal{J}$ then $\Delta, \Gamma, \Delta' \vdash [M/x]\mathcal{J}$.
\item
\textbf{Weakening}
If $\Gamma \vdash M : A$ and $\Gamma \subseteq \Delta$ then $\Delta \vdash M : A$.
\end{enumerate}
\end{lemma}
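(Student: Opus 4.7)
The plan is to prove both clauses by induction on the structure of derivations, handling Weakening first since it will be invoked in the base case of Substitution. Although only the term-formation judgement appears in the statement, both lemmas must be strengthened so as to apply uniformly to every judgement form ($\Gamma \vdash M : A$, $\Gamma \vdash M = N : A$, $\Gamma \vdash \phi \eff$, and $\Gamma \vdash \phi \leq \psi$) — otherwise the induction cannot go through the rules that mix judgement forms (for instance \Rmeasure, which has a $\leq$-premise alongside typing premises, or the effect-formation rules built on top of terms).

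For Weakening, I would proceed by induction on the derivation of the premise judgement. The base case \Rvar is immediate: if $x : A \in \Gamma$ and $\Gamma \subseteq \Delta$, then $x : A \in \Delta$, so \Rvar applies in the new context. The structural rule (exch) lets me normalise the shape of $\Delta$ whenever the new bindings need to be slotted into a particular position. For every context-splitting rule (\Rotimes, \Rlet, \Rcase, \Rmeasure, and the effect rule (case)), I freely enlarge one of the split pieces by the new bindings and invoke the IH on its premise. All other rules are premise-preserving on the context and go through by a direct application of the IH.

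For Substitution, the induction is again on the derivation of $\Delta, x : A, \Delta' \vdash \mathcal{J}$. The only non-trivial base case is \Rvar: if the variable is $x$ itself, the conclusion reduces to $\Delta, \Gamma, \Delta' \vdash M : A$, obtained from the assumption $\Gamma \vdash M : A$ by Weakening; if instead the variable is some $y \neq x$ drawn from $\Delta$ or $\Delta'$, I reapply \Rvar in the substituted context. For each context-splitting rule, the position of $x$ inside $\Delta, x : A, \Delta'$ forces it into exactly one of the two split pieces, and I apply the IH on that piece while leaving the other unchanged; this is also where I use (exch) to keep the order of the resulting context consistent with the rule's conclusion. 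The binder-introducing rules (\Rlet, \Rcase and the analogous effect constructors) are handled up to $\alpha$-conversion of the bound variables, choosing them fresh with respect to $\Gamma$ and $M$ to avoid capture; otherwise they, and all the equational, commuting-conversion, and effect-order rules, follow by straightforward application of the IH to each premise.

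The main obstacle is not any one case but rather the breadth of the case analysis: the system has a large number of rules, several of them (notably \Rmeasure, \Rletcase, \Rcasecase, \Rmeasurecase and \Rcaseovee) with many premises and complex context patterns, and each demands that the split of $\Delta, x : A, \Delta'$ be tracked carefully through the substitution. Bookkeeping over which split contains $x$, and ensuring fresh-renaming of binders before substitution, is the only genuine source of care; once the judgement-form-uniform statements are set up and Weakening is in hand, no step requires more than an invocation of the IH together with the original rule.
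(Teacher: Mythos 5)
Your proposal is correct and follows essentially the same route as the paper, which simply states that the result is proved by a straightforward induction on derivations; your write-up fills in the standard details (proving Weakening first, strengthening both claims to all judgement forms, and tracking the context splits and binder renamings) that the paper leaves implicit.
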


\begin{proof}
The proof is straightforward, by induction on derivations.
\end{proof}

\begin{lemma}[Equation Validity]
\label{lm:eqval}
$ $
\begin{enumerate}
\item
If $\Gamma \vdash M = N : A$ then $\Gamma \vdash M : A$ and $\Gamma \vdash N : A$.
\item
If $\Gamma \vdash \phi \leq \psi$ then $\Gamma \vdash \phi \prop$ and $\Gamma \vdash \psi \prop$.
\end{enumerate}
\end{lemma}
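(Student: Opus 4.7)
The plan is to prove both clauses simultaneously by induction on the structure of the derivations of $\Gamma \vdash M = N : A$ and $\Gamma \vdash \phi \leq \psi$. For each rule whose conclusion has one of these forms, I will show that both sides of the conclusion are well-formed in the same context, as a term of type $A$ or as an effect, respectively.

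For clause (1), most cases are routine: \Rref, \Rsym, \Rtrans and the congruence rules (\Rtimeseq, \Rleteq, \Rinleq, \Rinreq, \Rcaseeq, \Rmeaseq) follow directly by applying the induction hypothesis to the premises and then re-applying the corresponding term-formation rule. The $\eta$-rules and commuting conversions are similar but require several applications of formation rules to reassemble the two sides. The interesting cases are the $\beta$-conversions \Rbetaotimes, \Rbetaplusone, \Rbetaplustwo and the measurement reductions \Rmeasureone, \Rmeasureplus, \Rmeasurezero, \Rmeasureperm, \Rmeasurecase: here the right-hand side contains a substitution such as $[M/x]N$, so I will invoke the Substitution Lemma to establish its well-typedness.

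For clause (2), most inequalities follow by applying the induction hypothesis to the premises and then applying an effect-formation rule. The subtle cases are those whose premise mentions a compound effect whose subcomponents must be recovered --- for instance, \Rortho has premise $\Gamma \vdash 1 \leq \phi \ovee \psi$, and I need $\Gamma \vdash \phi \prop$ and $\Gamma \vdash \psi \prop$ to conclude that $\psi^\bot \leq \phi$ has well-formed sides. For such cases I will use a syntax-directed inversion on the effect-formation rules --- each effect constructor has a unique introduction rule --- combined with a recursive appeal to clause (2) of the lemma. This is precisely why the induction must be mutual: inversion on $\Gamma \vdash \phi \ovee \psi \prop$ yields $\Gamma \vdash \phi \perp \psi$, and validity of this inequality then recovers $\Gamma \vdash \psi^\bot \prop$ and hence $\Gamma \vdash \psi \prop$ via a further inversion through \Rbot.

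The main obstacle is coordinating this mutual recursion cleanly, especially on the $\beta$-rules (which combine the Substitution Lemma with the induction hypothesis) and on the effect-inequality rules whose implicit effect-formation premises must be recovered by inversion. The rest is a bookkeeping exercise in matching each equational or inequality rule to the corresponding formation rule.
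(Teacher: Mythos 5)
Your clause (1) is fine, and your identification of where the difficulty lies is accurate — but the device you propose for resolving it does not actually work, and this is precisely the point where the paper departs from a direct induction. Consider your own example, the rule (ortho$_1$) with premise $\Gamma \vdash 1 \leq \phi \ovee \psi$. The induction hypothesis gives you \emph{some} derivation of $\Gamma \vdash \phi \ovee \psi\ \mathrm{eff}$; inverting it through the formation rule $(\ovee)$ gives you \emph{some} derivation of $\Gamma \vdash \phi \leq \psi^\bot$; and you then want to apply clause (2) to that derivation to recover $\Gamma \vdash \phi\ \mathrm{eff}$ and $\Gamma \vdash \psi^\bot\ \mathrm{eff}$. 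But that derivation of $\Gamma \vdash \phi \leq \psi^\bot$ is not a subderivation of the derivation you are inducting on — it is a subderivation of a derivation \emph{manufactured by the induction hypothesis}, over whose size you have no control. So the ``further recursive appeal to clause (2)'' is not licensed by your induction: the recursion is not well-founded as stated. The same problem recurs at ($\perp$-rotate), where you additionally need an actual inequality derivation $\Gamma \vdash \psi \perp \chi$ (not merely well-formedness of the pieces) to reassemble the left-hand side $\psi \ovee \chi$, and this too must be extracted from a non-subderivation. Acknowledging that ``the main obstacle is coordinating this mutual recursion cleanly'' names the problem but does not solve it; some additional idea (a different induction measure, a strengthened hypothesis controlling the size of the produced formation derivations, or a reformulation of the rules) is required.

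The paper's proof supplies exactly that missing idea, and it is worth seeing why it works. One first passes to an auxiliary system QPEL$'$ in which the rule $(\ovee)$ carries the well-formedness of $\phi$ and $\psi$ as \emph{explicit additional premises}. In QPEL$'$ the direct induction you sketch goes through without circularity: inverting $\Gamma \vdash \phi \ovee \psi\ \mathrm{eff}$ now yields derivations of $\Gamma \vdash \phi\ \mathrm{eff}$ and $\Gamma \vdash \psi\ \mathrm{eff}$ \emph{immediately}, with no further recursive appeal to clause (2). Having established Equation Validity for QPEL$'$, one shows the two systems derive the same judgements (the nontrivial direction uses Equation Validity for QPEL$'$ to discharge the extra premises), and transfers the result back to QPEL. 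I would recommend either adopting this two-system strategy or, if you want to keep a single induction, making precise a measure under which your recursive call decreases — as written, it does not.
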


\begin{proof}
Let QPEL$'$ be the system where the rule \Rovee is replaced with
\newcommand{\Roveep}{($\ovee'$)\xspace}
\begin{prooftree}
\AxiomC{$\Gamma \vdash \phi \perp \psi$}
\AxiomC{$\Gamma \vdash \phi \eff$}
\AxiomC{$\Gamma \vdash \psi \eff$}
\LeftLabel{\Roveep}
\TrinaryInfC{$\Gamma \vdash \phi \ovee \psi \eff$}
\end{prooftree}
It is straightforward to prove that QPEL$'$ satisfies Equation Validity.
It follows that the derivable judgements of QPEL and QPEL$'$ are the same, and hence that QPEL satisfies Equation Validity.
\end{proof}

%

\begin{lemma}[Functionality]
$ $
\begin{enumerate}
\item
If $\Gamma \vdash M = N : A$ and $\Delta, x : A \vdash P : B$ then
$\Gamma, \Delta \vdash [M/x]P = [N/x]P : B$.
\item
If $\Gamma \vdash M = N : A$ and $\Delta, x : A \vdash \phi \prop$ then $\Gamma, \Delta \vdash [M/x]\phi \equiv [N/x]\phi$.
\end{enumerate}
\end{lemma}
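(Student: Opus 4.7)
The plan is to prove both parts simultaneously by induction on the derivations of $\Delta, x : A \vdash P : B$ and $\Delta, x : A \vdash \phi \prop$. By Lemma~\ref{lm:eqval}, the hypothesis $\Gamma \vdash M = N : A$ yields $\Gamma \vdash M : A$ and $\Gamma \vdash N : A$ separately, so the Substitution lemma together with Weakening guarantees that $\Gamma, \Delta \vdash [M/x]P : B$ and $\Gamma, \Delta \vdash [N/x]P : B$ (and analogously for effects) are derivable at each step; this lets us freely invoke the congruence rules whose premises require well-typedness on each side.

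The base case is \Rvar. If $P = x$, substitution gives $M$ on one side and $N$ on the other, and the weakened form of $\Gamma \vdash M = N : A$ is exactly what is required. If $P = y \neq x$, the substitution is a no-op and \Rref suffices. The closed-form cases \Runit and $0$ are handled by \Rref and \Rleqref respectively. For every compound term constructor, the corresponding congruence rule does all the work: \Rtimeseq, \Rleteq, \Rinleq, \Rinreq, \Rcaseeq and \Rmeaseq, each fed by the induction hypotheses on its immediate subterms.

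For the effect side, the $\case$ constructor is handled by combining three rewrites, one for the discriminant and one per branch: \Rcasecong rewrites $M$ to $N$ using the IH from part~1, while two applications of \Rcasemono (one in each direction of $\leq$) propagate the branch-level IHs; transitivity of $\equiv$ chains these into the full substitution congruence. The constructor $\phi^\bot$ is handled by two applications of \Rbotanti, upgrading $[M/x]\phi \equiv [N/x]\phi$ to $[M/x]\phi^\bot \equiv [N/x]\phi^\bot$. For $\phi \ovee \psi$ I would pair \Roveemono with \Roveeprecomm to obtain a congruence in both slots, taking care to transport the orthogonality side-condition $\phi \perp \psi$ from the IH through Equation Validity. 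For $\phi \cdot \psi$, the premise of (mult) forces $\vdash \phi \eff$ in the empty context, so $\phi$ is closed and $[M/x](\phi \cdot \psi) = \phi \cdot [M/x]\psi$; only right-congruence of $\cdot$ under $\equiv$ is then needed.

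The hardest part will be the effect-level congruences for $\ovee$, $\bot$ and $\cdot$, since none is given as a primitive rule and each must be derived from the monotonicity, distributivity and orthocomplement axioms. Establishing right-congruence of $\cdot$ in particular looks subtle: the usual argument via $\psi' = \psi \ovee \chi$ is not directly available in the calculus, and a workable derivation may need to pass through the distributivity laws (dist$_R$) together with the interaction between $\bot$ and $\ovee$. Once these auxiliary congruence lemmas are isolated and proved, the main induction is entirely routine.
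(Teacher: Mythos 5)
Your overall plan (a simultaneous induction, discharging each constructor with its congruence rule and deriving auxiliary congruences for the effect connectives) is workable for most cases, but it glosses over the one case that the paper's entire proof strategy is built around: (measure). You say the case is handled by (measure-eq) ``fed by the induction hypotheses on its immediate subterms'', but the immediate premise of (measure) is the single derivability judgement $\Gamma \vdash 1 \leq \phi_1 \ovee \cdots \ovee \phi_n$, not effect-formation judgements for the individual $\phi_i$. To invoke (measure-eq) you need $\Gamma \vdash [M/x]\phi_i \equiv [N/x]\phi_i$ for each $i$, i.e.\ part~2 of the lemma applied to $\Gamma_0 \vdash \phi_i\ \mathrm{eff}$ --- and no derivation of that judgement occurs as a subderivation of the (measure) inference. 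You can manufacture one via Equation Validity followed by inversion of the rules ($\ovee$) and ($\bot$), but the derivation so obtained is not a subderivation of the one you are inducting on, so an induction ``on the derivations'' supplies no induction hypothesis for it. The paper sidesteps this by a different device: it defines a variant system QPEL$''$ in which (measure) carries the additional premise $\Gamma \vdash \phi_1 \ovee \cdots \ovee \phi_n\ \mathrm{eff}$, proves Equation Validity for QPEL$''$ by the technique of Lemma~\ref{lm:eqval}, concludes that QPEL and QPEL$''$ have the same derivable judgements, and then runs the straightforward induction in QPEL$''$, where the effect-formation derivation is available as a premise. To repair your argument you should either adopt that reformulation, or change the induction to be on the structure of $P$ and $\phi$ and explicitly state and prove the inversion lemma you are implicitly using.

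On the effect connectives: your treatments of $\phi^\bot$ (two uses of ($\bot$-antitone)), of $\phi \ovee \psi$ (($\ovee$-mono) plus ($\ovee$-comm$_\leq$), transporting orthogonality through ($\bot$-antitone) and transitivity), and of the case-effect (combining (case-cong) with (case-mono)) are all correct. You are also right that right-congruence of $\cdot$ under $\equiv$ is not an instance of any primitive rule and must be derived from (dist$_R$), (unit$_R$), (ortho$_1$)/(ortho$_2$) and the $\ovee$-monotonicity rules; this auxiliary lemma is needed by any proof (the paper's included, though it does not spell it out), and until you actually derive it the (mult) case of part~2 remains open. So the substantive gap to close is the (measure) case; the $\cdot$-congruence is a sub-lemma you have correctly identified but not yet supplied.
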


\begin{proof}
Let QPEL$''$ be the system where \Rmeasure is replaced with the rule
\begin{prooftree}
\AxiomC{$\Gamma \vdash 1 \leq \ovee_{i=1}^n \phi_i$}
\AxiomC{$\Gamma \vdash \ovee_{i=1}^n \phi_i \eff$}
\AxiomC{$\Delta \vdash M_i : A \qquad (1 \leq i \leq n)$}
\LeftLabel{(measure$''$)}
\TrinaryInfC{$\Gamma, \Delta \vdash \measure_{i=1}^n \phi_i \mapsto M_i : A$}
\end{prooftree}

We can prove that QPEL$''$ satisfies Equation Validity, using the same proof technique as Lemma \ref{lm:eqval}.  It follows that QPEL and QPEL$''$ have the same derivable judgements.  It is straightforward to prove that QPEL$''$ satisfies Functionality, and so it follows that QPEL satisfies Functionality.
\end{proof}

\begin{lemma}
\begin{enumerate}
\item
\label{ltt1}
If $\Gamma, x : A, y : B \vdash M : C$, $\Delta \vdash N : A \otimes B$ and $\Theta,
z : C \vdash P : D$, then
\[ \Gamma, \Delta, \Theta \vdash [\lett x \otimes y = N \inn M / z] P =
(\lett x \otimes y = N \inn [M/z]P) : D \]


\item
\label{ltt2}
If $\Gamma \vdash M : A + B$, $\Delta, x : A \vdash N : C$, $\Delta, y : B \vdash P : C$ and $\Theta, z : C \vdash Q : D$, then
\begin{align*}
\Gamma, \Delta, \Theta \vdash & [ \case M \of \inl{x} \mapsto N \mid \inr{y} \mapsto P / z] Q \\
& = \case M \of \inl{x} \mapsto [N/z]Q \mid \inr{y} \mapsto [P/z]Q : D
\end{align*}

\end{enumerate}
\end{lemma}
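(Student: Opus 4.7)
The plan is to prove both parts by structural induction on the term into which we substitute: $P$ for part~\ref{ltt1} and $Q$ for part~\ref{ltt2}. The key observation is that the commuting conversions of QPEL are designed precisely to push an outer $\lett$ or $\case$ through one layer of the inner term, and each such conversion matches a term former one-for-one.

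For part~\ref{ltt1}, the base case $P = z$ is immediate, since both sides reduce under the indicated substitutions to $\lett x \otimes y = N \inn M$. For each inductive case I would apply the induction hypothesis to the subterm of $P$ in which $z$ appears and then close with the matching commuting conversion: \Rletotimes when $P = P_1 \otimes P_2$, \Rletlet when $P = \lett u \otimes v = Q' \inn R$, and \Rletcase when $P = \case Q' \of \inl{u} \mapsto R \mid \inr{v} \mapsto S$. The coproduct injections, the unit $\langle \rangle$, and the $\measure$ construct are handled via the corresponding congruence rules (\Rinleq, \Rinreq, \RetaI, \Rmeaseq) together with the induction hypothesis applied to the relevant subterm(s).

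Part~\ref{ltt2} follows exactly the same template, now by induction on $Q$, with \Rcasecase and \Rcaseotimes as the principal commuting conversions, and with the base cases where $Q$ is an injection discharged by the $\beta$-rules \Rbetaplusone and \Rbetaplustwo; the remaining constructors are again dealt with by their congruence rules.

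The main obstacle I anticipate is not any single step but the bookkeeping in the inductive cases: one must track, using the context splits recorded in the sub-derivations, which subterm of $P$ (respectively $Q$) contains the variable $z$, so that the induction hypothesis is applied to the correct position, and one must verify that the side conditions of the commuting conversion being invoked are met by the typing judgements carried along. A subtle subcase is when $z$ does not occur in the remaining subterm: here one needs an auxiliary observation that the outer binding can be reassociated so as to leave the subterm intact, which follows from \Rletotimes (respectively \Rcaseotimes) combined with the congruence rules. Once this discipline is fixed, the rest of the proof is routine case analysis.
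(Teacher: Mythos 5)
Your overall instinct---that the commuting conversions are what drive this lemma---is right, but the induction you describe cannot be closed with the rules you cite, and the crucial idea of the paper's proof is missing. The primitive commuting conversions of QPEL only cover an eliminator occurring as the \emph{subject} of a $\lett$ (rules (let-commute), (let-case)), nested inside another $\case$ subject ((case-commute)), or as the \emph{left} component of a tensor ((let-$\otimes$), (case-$\otimes$)). In your induction on $P$, writing $L$ for $\lett x \otimes y = N \inn M$, most cases place $L$ (after applying the induction hypothesis) in a position for which no such rule exists: if $P = P_1 \otimes P_2$ with $z$ in $P_2$ you need a right-hand version of (let-$\otimes$); if $P = \inl{P_1}$ you need $\inl{\lett x\otimes y = N \inn R} = \lett x\otimes y = N \inn \inl{R}$, which is not a congruence (the rule (inl-eq) can only rewrite \emph{inside} the $\mathsf{inl}$; it cannot hoist the binder out past it); if $z$ occurs in the subject of a $\case$ you need a ``case-of-let'' conversion; if $z$ occurs in the branches of a $\case$, the right-hand side of (let-case) does not match what you have (it has distinct subjects and a common body, whereas you have a common subject $N$ and distinct bodies); and similarly for $\measure$ and for $z$ occurring in the body of an inner $\lett$. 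None of these is among the listed rules, and they are not derivable by the means you mention.

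The paper's proof avoids the induction entirely. Define the local definition $\lett z = Q \inn P$ to be $\lett z \otimes w = Q \otimes \langle\rangle \inn P$ with $w$ fresh; by ($\beta\otimes$) this equals $[Q/z]P$. Part 1 then becomes the single statement $(\lett z = L \inn P) = (\lett x\otimes y = N \inn \lett z = M \inn P)$, which is proved \emph{uniformly in $P$} by one application of (let-$\otimes$) to rewrite $L \otimes \langle\rangle$ as $\lett x\otimes y = N \inn (M\otimes\langle\rangle)$, the congruence (let-eq), and one application of (let-commute) to the resulting nested $\lett$ with inner subject $M\otimes\langle\rangle : C\otimes I$; part 2 is the analogous computation using (case-$\otimes$) and (let-case). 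If you insist on your induction, you would first have to establish each of the missing commuting conversions as a derived rule, and the only way to do that appears to be via this same derived-$\mathsf{let}$ device---at which point the induction is superfluous. The missing idea, which the paper flags as its main observation here, is precisely that local definitions are definable from the rules for $I$ and $\otimes$.
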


\begin{proof}
The proof of this lemma involves noting that local definitions can be defined from the rules for $I$ and $\otimes$, which to the best of my knowledge is a new result about linear type theory.

If $\Gamma \vdash M : A$ and $\Delta, x : A \vdash N : B$, we define the term $\lett x = M \inn N$ to be
\[ \lett x \otimes y = M \otimes \langle \rangle \inn N \]
so $\Gamma, \Delta \vdash \lett x = M \inn N : B$ and
\[ \Gamma, \Delta \vdash (\lett x = M \inn N) = [M/x]N : B \enspace . \]

From the rules of derivation in QPEL, we can show that:
\begin{itemize}
\item
If $\Gamma \vdash N : A \otimes B$ and $\Delta, x : A, y : B \vdash M : C$ and $\Theta, z : C \vdash P : D$ then
\[ \Gamma, \Delta, \Theta \vdash (\lett z = \lett x \otimes y = N \inn M \inn P) = (\lett x \otimes y = N \inn \lett z = M \inn P) : D \]
\item
If $\Gamma \vdash M : A + B$, $\Delta, x : A \vdash N : C$, $\Delta, y : B \vdash P : C$, and $\Theta, z : C \vdash Q : D$, then
\begin{align*}
 \Gamma, \Delta, \Theta \vdash & \lett z = (\case M \of \inl{x} \mapsto N \mid \inr{y} \mapsto P) \inn Q \\ 
 & = \case M \of \inl{x} \mapsto \lett z = N \inn Q \mid \inr{y} \mapsto \lett z = P \inn Q : D
 \end{align*}
 \end{itemize}
The result then follows.
\end{proof}

%

\section{Semantics}

\subsection{State and Effect Triangles}

Let $E$ be a commutative effect monoid.
Recall the adjunction $\Conv{E}[-,E] \dashv \EMod{E}[-,E] : \Conv{E} \rightleftarrows \EMod{E}^{\mathrm{op}}$.

\newcommand{\meas}{\operatorname{meas}}
\begin{df}[State-and-Effect Triangle]
A \emph{state-and-effect triangle} consists of:
\begin{itemize}
\item
a symmetric monoidal category $\mathcal{V}$ with binary coproducts that distribute over the tensor, such that the tensor unit is terminal;
\item
an effect monoid $E$;
\item
a functor $P : \mathcal{V} \rightarrow \EMod{E}^\op$ that preserves finite coproducts and the terminal object;
\item
a symmetric monoidal functor $S : \mathcal{V} \rightarrow \Conv{E}$;
\item
given a finite set $r_1, \ldots, r_n \in PA$ such that $r_1 \ovee \cdots \ovee r_n = 1$, an arrow $\meas_A(r_1, \ldots, r_n) : A \rightarrow n \cdot I$ in $\mathcal{V}$;
\item
a natural transformation $\alpha : P \rightarrow \Conv{E}[S - , E]$;
\item
a natural transformation $\beta : S \rightarrow \EMod{E}[P - , E]$;
\end{itemize}
\pagebreak
such that
\begin{enumerate}
\item
given a permutation $p$ on $\{1, \ldots, n\}$, we have
\[ \meas_A(r_{p(1)}, \ldots, r_{p(n)}) = \pi_p \circ \meas_A(p_1, \ldots, p_n) \]
where $\pi_p : n \cdot I \rightarrow n \cdot I$ satisfies
\[ \pi_p \circ \kappa_i = \kappa_{p(i)} \]
\item
$\meas_A(p_1, \ldots, p_n, 0) = \kappa_1 \circ \meas_A(p_1, \ldots, p_n) : A \rightarrow n \cdot I \rightarrow (n + 1) \cdot I$
\item
$\meas_A(p \ovee q, r_1, \ldots, r_n) = [\kappa_1, \kappa_1, \kappa_2, \ldots, \kappa_{n+1}] \circ \meas_A(p, q, r_1, \ldots, r_n)$
\item
$\meas_A$ is natural in $A$; i.e. given $f : A \rightarrow B$,
\[ \meas_A(r_1, \ldots, r_n) \circ f = \meas_B(P f (r_1), \ldots, P f(r_n)) \]
\item
$\alpha_A(p)(x) = \beta_A(x)(p)$ for all $A$, $x$, $p$.
\end{enumerate}
\end{df}

We think of the arrows in $\mathcal{V}$ as \emph{computations}, the arrows $SA \rightarrow SB$ as \emph{state transformers}, and the arrows $PA \rightarrow PB$ as \emph{predicate transformers}.

We refer to $\alpha$ and $\beta$ as the \emph{validity transformations}, since the intuition is that $\alpha_A(p)(x) = \beta_A(x)(p)$ is the probability of the statement 'Predicate $p$ is valid at state $x$'.

\begin{center}
\includegraphics[trim=4.5cm 22.5cm 10cm 4cm,clip=true]{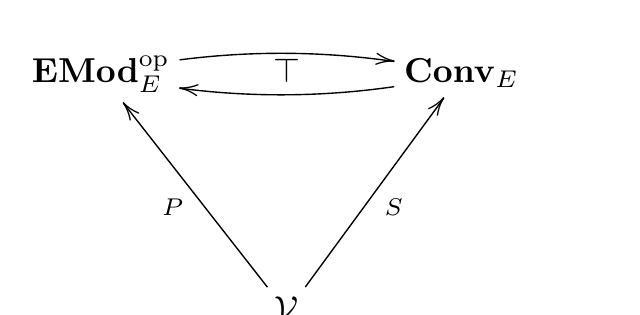}
\end{center}

\subparagraph{Examples}
The following are all examples of state-and-effect triangles:
\begin{itemize}
\item
Take $\mathcal{V}$ to be the category $\FdHilbUn$ of finite-dimensional Hilbert spaces with unitary maps, $PH$ to be the set of \emph{effects} on $H$ (positive operators less than $I$), and $SH$ to be the set of \emph{density matrices} on $H$.
\item
Take $\mathcal{V}$ to be $\Kl{\mathcal{D}_E}$, the Kleisli category of the distribution monad $\mathcal{D}_E$.  $S$ is the canonical functor from the Kleisli category to the Eilenberg-Moore category.  For $X \in Set$, $PX$ is the set of all functions $X \rightarrow \mathcal{D}_E(2)$, equivalently the set of functions $X \rightarrow E$.
\item
Take $\mathcal{V}$ to be
$\CStar$, the category of C$^*$-algebras and positive unital maps.  $PA$ is the set of all effects on $a$, $[0,1]_A = \{ a \in A : 0 \leq a \leq 1 \}$.  $SA$ is the set of all positive unital maps $A \rightarrow \mathbb{C}$.
\item
Take $\mathcal{V}$ to be $\Set$, $PA$ the power set of $A$, and $SA = A$.  The effect monoid in this case is $\{0,1\}$.
\pagebreak
\item
More generally, let $(\mathcal{V}, \otimes, I)$ be any symmetric monoidal category with finite coproducts $(0, +)$ such that:
\begin{itemize}
\item
diagrams of the following form are always pullbacks in $\mathcal{V}$:

\begin{center}
\includegraphics[trim=4.5cm 22.5cm 6cm 4cm,clip=true]{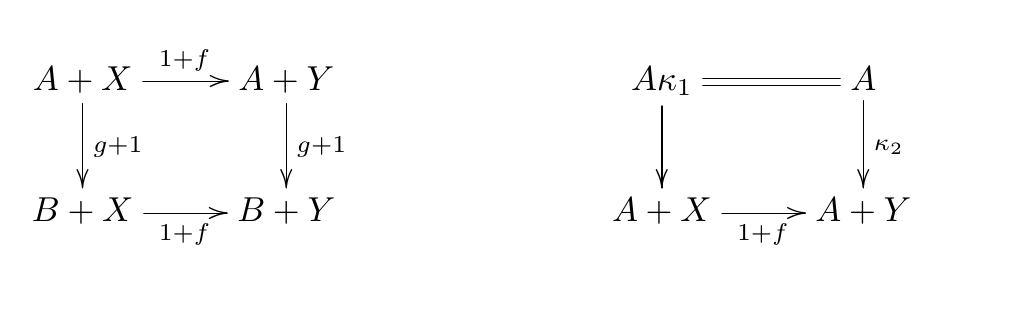}
\end{center}

\item
for each non-zero $n \in \mathbb{N}$, the family of maps
\[ [\rhd_i, \kappa_2] : n \cdot X + 1 \rightarrow X + 1 \]
are jointly monic where
where, for $1 \leq i \leq n$, the `partial projection' $\rhd_i : n \cdot X \rightarrow X + 1$ is such that
\[
\rhd_i \circ \kappa_j = \begin{cases}
\kappa_1 & \mbox{if } i = j \\
\kappa_2 \circ ! & \mbox{if } i \neq j
\end{cases}
\]
\end{itemize}

Take $E = \mathcal{V}[1,2]$.
Denife $P$ to be the functor $\mathcal{V}[-,2]$, and $S$ to be the functor $\mathcal{V}[-,1]$.  
Define $\alpha$ and $\beta$ by
\[\alpha(p)(\omega) = \beta(\omega)(p) = p \circ \omega \enspace . \]
for $p : A \rightarrow 2$ and $\omega : 1 \rightarrow A$.

The arrow $\meas_A(r_1, \ldots, r_n)$ is the unique arrow such that $\rhd_i \circ \meas_A(r_1, \ldots, r_n) = r_i$.

See \cite{Jacobs} for a verification that these constructions are all well-defined and satisfy the axioms of a state-and-effect triangle.  The previous examples are all special cases of this construction.
\end{itemize}

\subparagraph{Remarks}
\begin{enumerate}
\item
We do not want $S$ always to be a strong monoidal functor.  Intuitively, $S(A) \otimes S(B)$ gives the mixtures of pure states of $A \otimes B$, while $S(A \otimes B)$ also includes entangled states, and these will not be isomorphic in general.
\item
The condition $\alpha_A(p)(x) = \beta_A(x)(p)$ can also be written as $\alpha = G \beta \circ \eta P$ or as $\beta = F \alpha \circ \epsilon S$,
where $F = \EMod{E}[-,E] : \EMod{E}^\mathrm{op} \rightarrow \Conv{E}$ and $G = \Conv{E}[-,E] : \Conv{E} \rightarrow \EMod{E}^\mathrm{op}$.
\end{enumerate}

\subsection{Semantics}
\label{section:completeness}

\begin{df}
Given any state-and-effect triangle, we interpret the syntax as follows.
\begin{itemize}
\item
We associate with every type $A$ an object $\brackets{A}$ of $\mathcal{V}$ thus:
\begin{align*}
\brackets{I} & = I \\
\brackets{A \otimes B} & = \brackets{A} \otimes \brackets{B} \\
\brackets{A + B} & = \brackets{A} + \brackets{B}
\end{align*}
\item
We associate with every context $\Gamma$ an object $\brackets{\Gamma}$ of $\mathcal{V}$ as follows.
\begin{align*}
\brackets{\langle \rangle} & = I \\
\brackets{\Gamma, x : A} & = \brackets{\Gamma} \otimes \brackets{A}
\end{align*}
\item
We associate with every term $\Gamma \vdash M : A$ an arrow $\brackets{M} = \brackets{\Gamma \vdash M : A} : \brackets{\Gamma} \rightarrow \brackets{A}$ in $\mathcal{V}$ as follows.
\begin{itemize}
\item
$\brackets{x_1 : A_1, \ldots, x_n : A_n \vdash x_i : A_i}$ is the arrow

\begin{center}
\includegraphics[trim=5cm 22cm 3cm 4.5cm,clip=true]{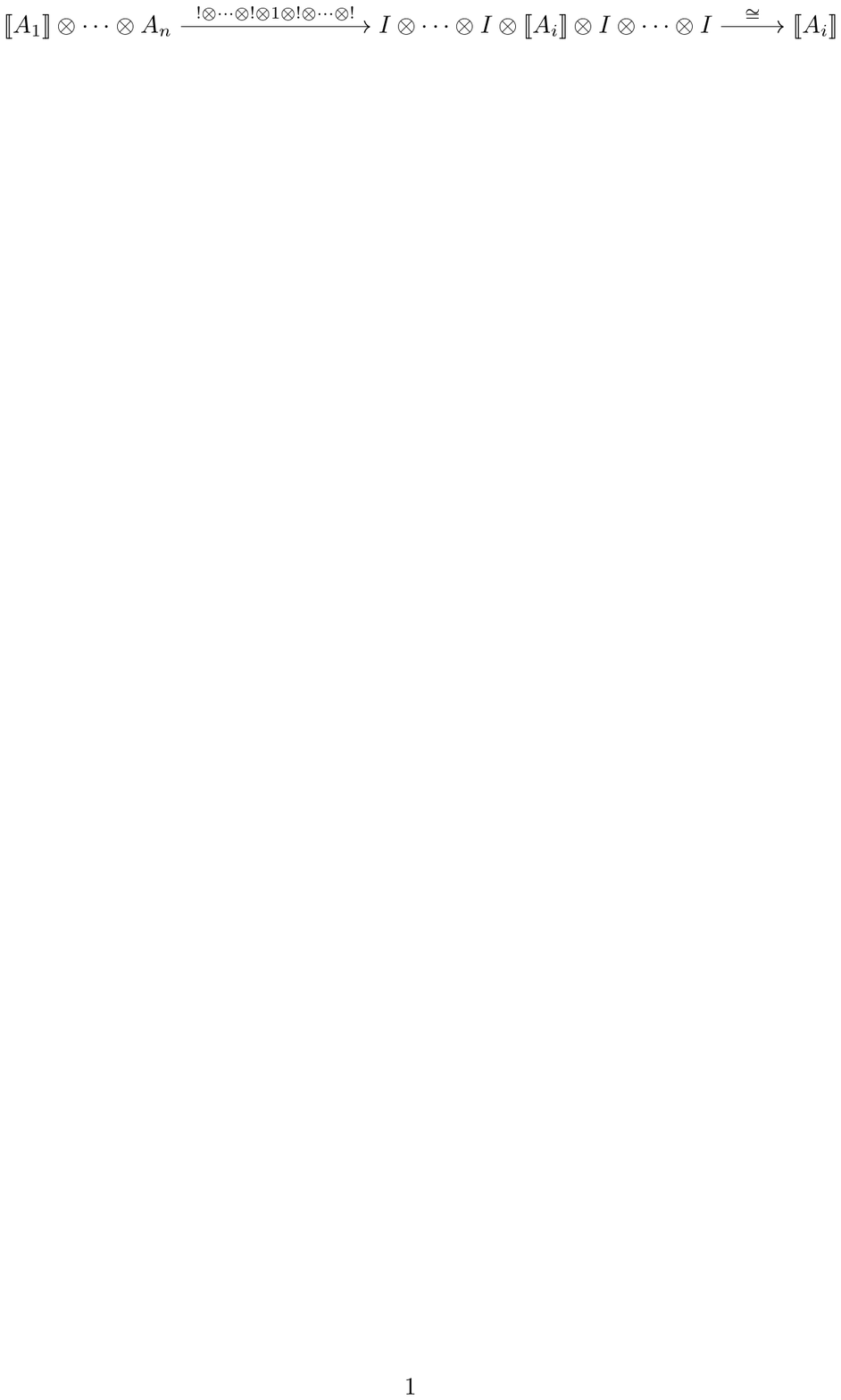}
\end{center}
\item
$\brackets{M \otimes N} = \brackets{M} \otimes \brackets{N}$
\item
$\brackets{\Gamma, \Delta \vdash \lett x \otimes y = M \inn N : C}$ is

\begin{center}
\includegraphics[trim=5cm 22.5cm 3cm 4cm,clip=true]{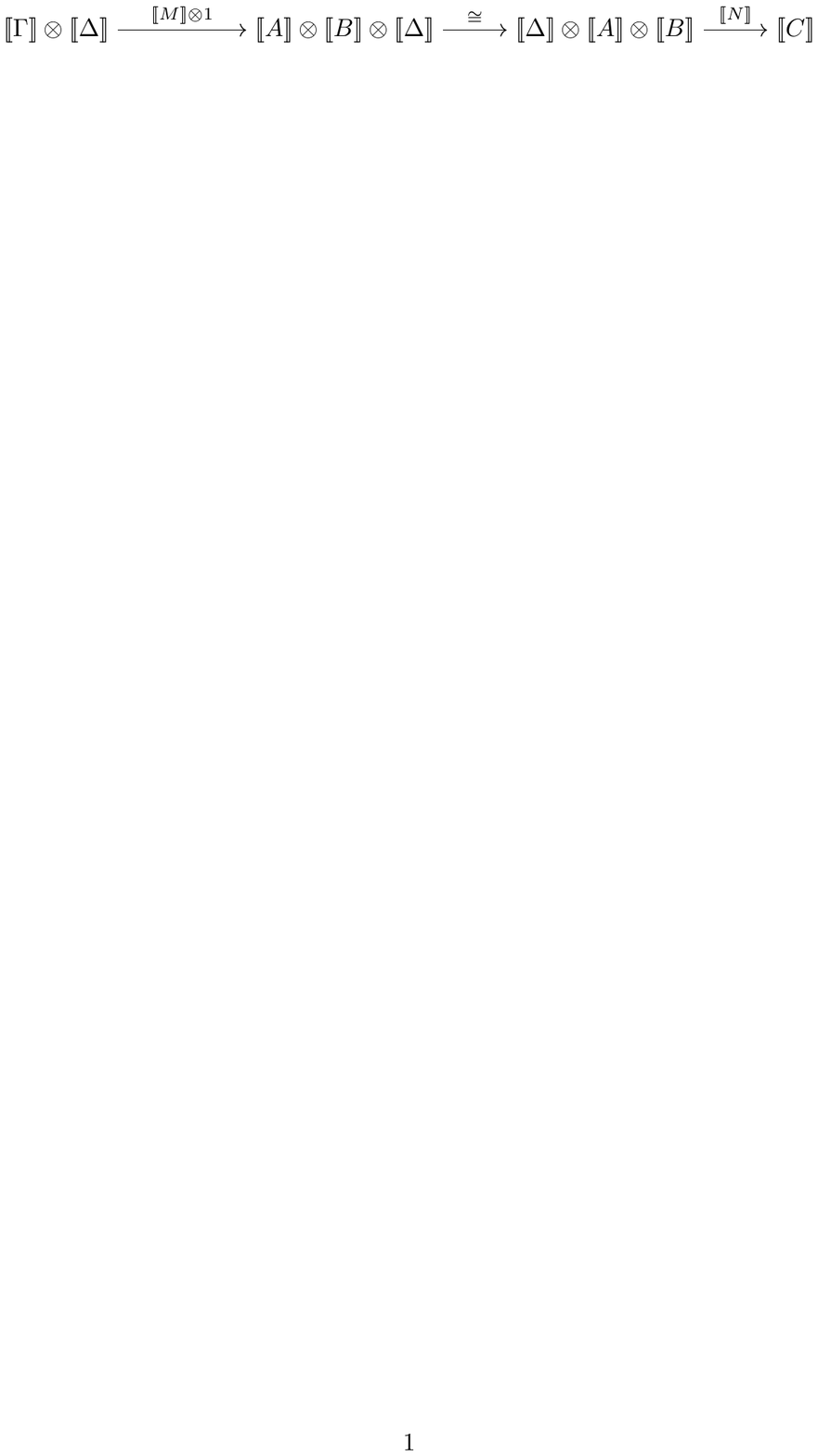}
\end{center}
\item
$\brackets{\Gamma \vdash \langle \rangle : I} = ! : \brackets{\Gamma} \rightarrow I$
\item
$\brackets{\inl{M}} = \kappa_1 \circ \brackets{M}$
\item
$\brackets{\inr{M}} = \kappa_2 \circ \brackets{M}$
\item
$\brackets{\Gamma, \Delta \vdash \case M \of \inl{x} \mapsto N \mid \inr{y} \mapsto P : C }$ is the arrow

\begin{center}
\includegraphics[trim=5cm 22cm 3cm 4.5cm,clip=true]{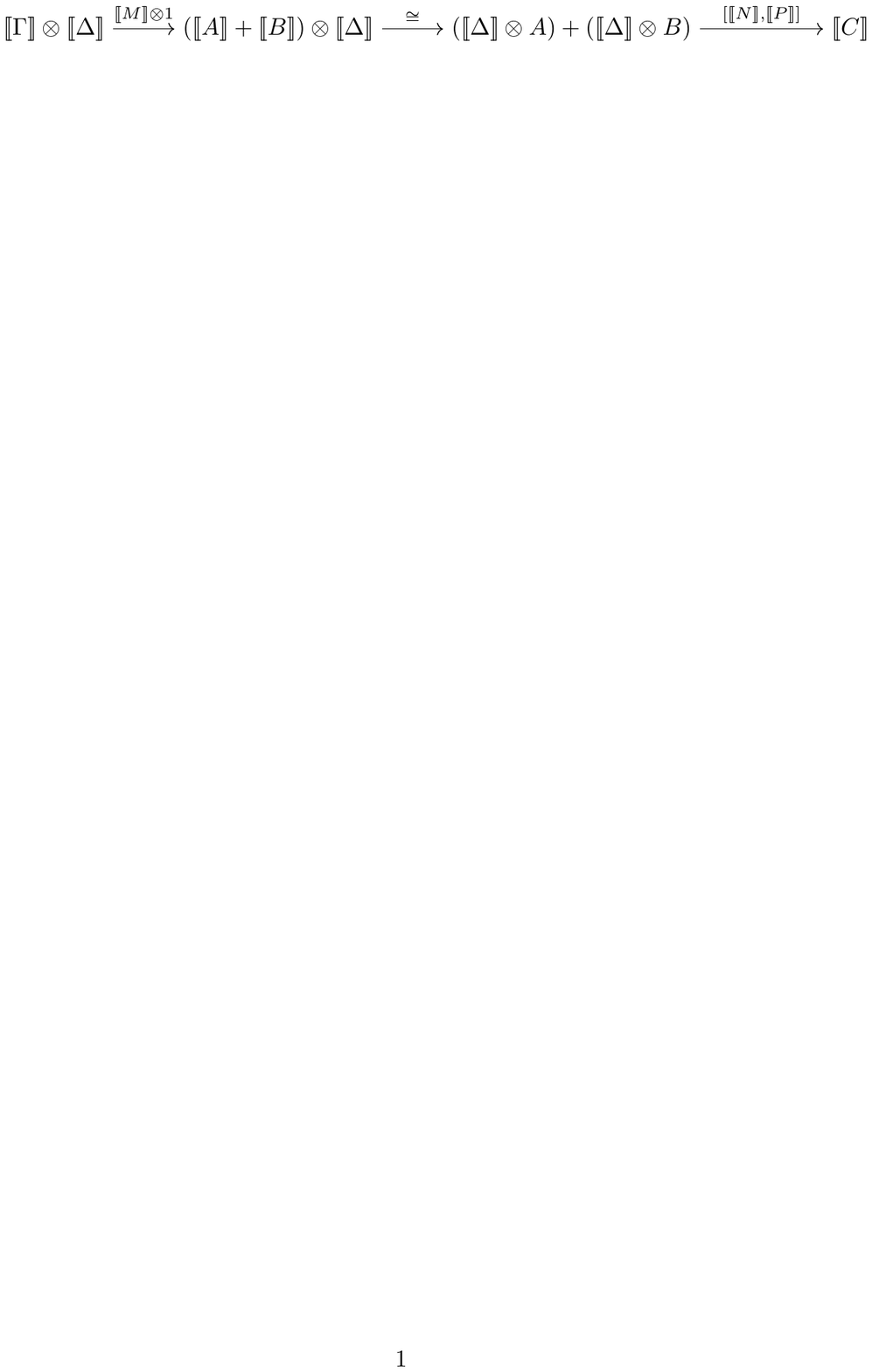}
\end{center}

\item
$\brackets{\Gamma, \Delta \vdash \measure \phi_1 \mapsto M_1 \mid \cdots \mid \phi_n \mapsto M_n : A}$
is the arrow

\begin{center}
\includegraphics[trim=5cm 22cm 3cm 4.5cm,clip=true]{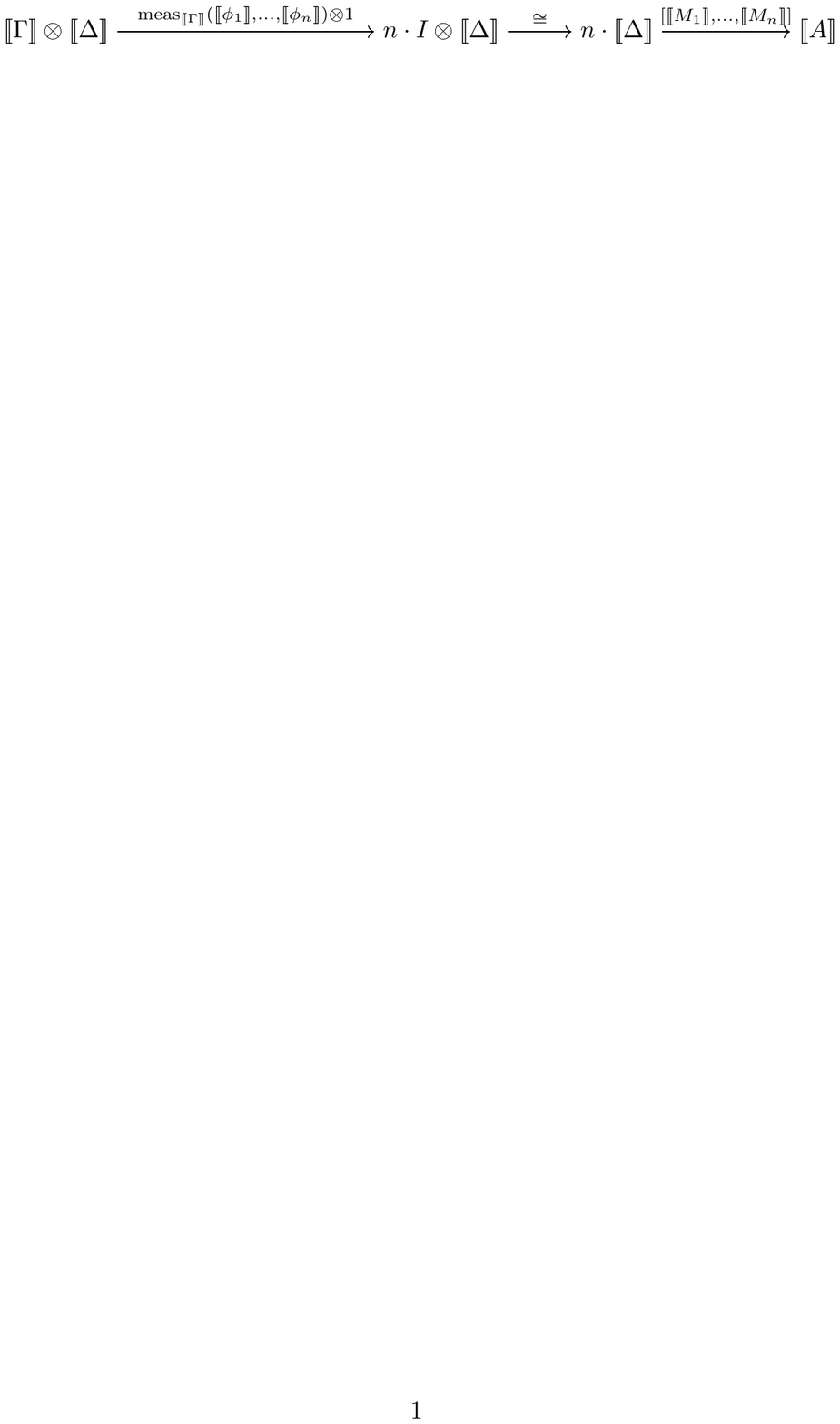}
\end{center}
\end{itemize}
\item
We associate with every proposition $\phi$ such that $\Gamma \vdash \phi \prop$, an element $\brackets{\phi} \in P \brackets{\Gamma}$ as follows.
\begin{align*}
\brackets{0} & = 0 \\
\brackets{\phi^\bot} & = \brackets{\phi}^\bot \\
\brackets{\phi \ovee \psi} & = \brackets{\phi} \ovee \brackets{\psi} \\
\brackets{\phi \cdot \psi} & = \brackets{\phi} \cdot \brackets{\psi}
\end{align*}
In this last line, if $\Gamma \vdash \phi \cdot \psi \eff$ then $\brackets{\phi} \in P I$ and $\brackets{\psi} \in P \brackets{\Gamma}$.
We use the fact that $E \cong PI$ (since $P$ preserves the terminal object), so we may take $\brackets{\phi}$ to be an element of $E$.

$\brackets{\Gamma, \Delta \vdash \case M \of \inl{x} \mapsto \phi \mid \inr{y} \mapsto \psi}$ is defined as follows.  We have
\[ 1 \otimes \brackets{M} : \brackets{\Gamma} \otimes \brackets{\Delta} \rightarrow \brackets{\Gamma} \otimes (\brackets{A} + \brackets{B}) = (\brackets{\Gamma} \otimes \brackets{A}) + (\brackets{\Gamma} \otimes \brackets{B}) \]
and so
\[ P(1 \otimes \brackets{M}) : P(\brackets{\Gamma} \otimes \brackets{A}) \times P(\brackets{\Gamma} \otimes \brackets{B}) \rightarrow
P(\brackets{\Gamma} \otimes \brackets{\Delta}) \]
(Recall that $P : \mathcal{V} \rightarrow \EMod{E}^\op$ preserves binary coproducts, and so $P(A + B)$ is the product of $PA$ and $PB$ in $\EMod{E}$.)

We define $\brackets{\case M \of \inl{x} \mapsto \phi \mid \inr{y} \mapsto \psi}$ to be
\[ P(1 \otimes \brackets{M}) (\brackets{\phi}, \brackets{\psi}) \enspace . \]
\end{itemize}
\end{df}

\begin{lemma}
\begin{enumerate}
\item
If $\Gamma, x : A \vdash M : B$ and $\Delta \vdash N : A$, then $\brackets{\Gamma, \Delta \vdash [N/x]M : B}$ is the arrow

\begin{center}
\includegraphics[trim=5cm 22cm 3cm 4.5cm,clip=true]{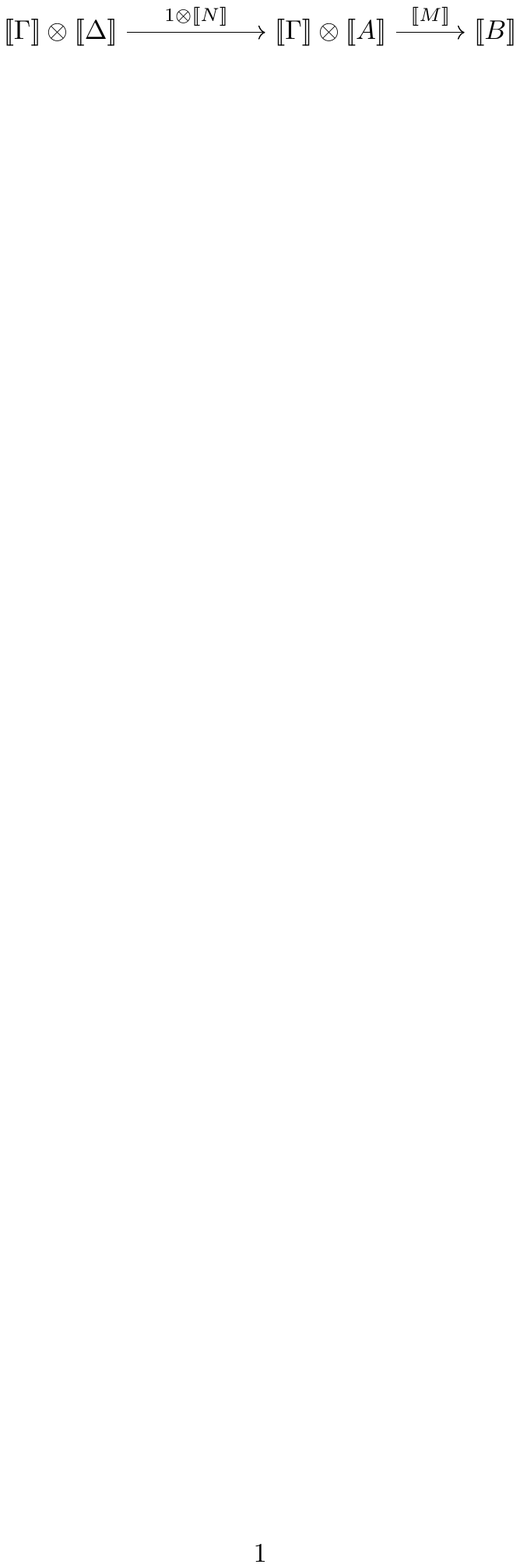}
\end{center}

\item
If $\Gamma, x : A \vdash \phi \prop$ and $\Delta \vdash M : A$ then
\[ \brackets{[M/x]\phi} = P(1_{\brackets{\Gamma}} \otimes \brackets{M})(\brackets{\phi}) \]
\end{enumerate}
\end{lemma}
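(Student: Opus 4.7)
The plan is to prove both parts simultaneously by induction on the derivation of $\Gamma, x : A \vdash M : B$ (for part 1) together with the derivation of $\Gamma, x : A \vdash \phi \prop$ (for part 2). The simultaneity is essential: a term $M$ may contain an effect through the \textsf{measure} constructor, and an effect $\phi$ may contain a term through the \textsf{case} constructor, so neither statement can be established in isolation.

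For part 1, I would start with the variable cases. If $M = x_j$ with $x_j \ne x$, then $[N/x]M = x_j$ and the claim reduces to showing that the two projection-style arrows agree; since the tensor unit $I$ is terminal we have $! \circ \brackets{N} = !$, so the equation follows from bifunctoriality of $\otimes$. If $M = x$ then $[N/x]M = N$, and one checks analogously, again using terminality to discard the $\brackets{\Gamma}$ factor, that the displayed composite is $\brackets{N}$. The inductive steps for $M \otimes M'$, $\lett x \otimes y = M \inn N$, $\langle\rangle$, $\inl{M}$, $\inr{M}$ and \textsf{case} are direct applications of the inductive hypothesis, combined with bifunctoriality of $\otimes$, naturality of the associator and symmetry, and naturality of the distributivity isomorphism $\brackets{\Gamma} \otimes (\brackets{A} + \brackets{B}) \cong (\brackets{\Gamma} \otimes \brackets{A}) + (\brackets{\Gamma} \otimes \brackets{B})$. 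In the \textsf{measure} case I would invoke part 2 to rewrite $\brackets{[N/x]\phi_i}$ as $P(1 \otimes \brackets{N})(\brackets{\phi_i})$, and then close the diagram using axiom (4) of the definition of a state-and-effect triangle, namely naturality of $\meas$ in the base object.

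For part 2, the atomic cases $0$, $\phi^\bot$, $\phi \ovee \psi$ and the scalar product $\phi \cdot \psi$ all follow immediately from the fact that $P(1 \otimes \brackets{N})$, being a morphism in $\EMod{E}$, is an effect module homomorphism and hence preserves these operations; for $\phi \cdot \psi$ one also uses $E \cong PI$ and the fact that $P$ preserves the terminal object. The genuinely interesting case is $\phi = \case M' \of \inl{x'} \mapsto \phi_1 \mid \inr{y'} \mapsto \phi_2$; here I would apply part 1 to the substitution into $M'$ and part 2 inductively to $\phi_1$ and $\phi_2$, and then conclude by functoriality of $P$ together with the fact that $P$ converts the coproduct $\brackets{A} + \brackets{B}$ into a product in $\EMod{E}$, so that forming the pair $(\brackets{\phi_1}, \brackets{\phi_2})$ commutes with the application of $P(1 \otimes \brackets{N})$ in each coordinate.

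The main obstacle is twofold. First, the variable base case is surprisingly verbose to verify in full generality because one must track how the projection onto $x_j$ reshuffles relative to the inserted copy of $\brackets{\Delta}$; terminality of $I$ is what makes it go through. Second, the \textsf{measure} case (in part 1) and the effect-\textsf{case} case (in part 2) are where the two inductions genuinely feed each other, and where axiom (4) of a state-and-effect triangle and the preservation-of-finite-coproducts property of $P$ are deployed in tandem. Once these two ingredients are secured, every remaining case reduces to a routine diagram chase using naturality of the monoidal and coproduct structure.
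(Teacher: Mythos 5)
Your proposal is correct and matches the paper's own (very terse) proof, which simply performs the same simultaneous induction on $M$ and $\phi$ and declares all cases straightforward; you have filled in exactly the details the paper omits, including the use of terminality of $I$ in the variable case, naturality of $\meas$ in the \textsf{measure} case, and preservation of coproducts by $P$ in the effect-\textsf{case} case. No gaps.
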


\begin{proof}
The two parts are proved simultaneously, by induction on $M$ and $\phi$.  All cases are straightforward.
\end{proof}


\begin{df}
In a state-and-effect triangle,
a judgement $\Gamma \vdash M = N : A$ is \emph{true} iff $\brackets{\Gamma \vdash M : A} = \brackets{\Gamma \vdash N : A}$.
A judgement $\Gamma \vdash \phi \leq \psi$ is \emph{true} iff $\brackets{\Gamma \vdash \phi \prop} \leq \brackets{\Gamma \vdash \psi \prop}$, in the order in the effect module $P \brackets{\Gamma}$.
\end{df}

\begin{theorem}[Soundness]
Any derivable judgement is true in any state-and-effect triangle.
\end{theorem}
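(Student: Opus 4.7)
The plan is to proceed by induction on the derivation of the judgement in a fixed state-and-effect triangle, showing that each rule preserves truth. The cases for reflexivity, symmetry, transitivity, and the congruence rules (for $\otimes$, $\lett$, $\inl$, $\inr$, $\case$ and $\measure$) are immediate from the inductive hypothesis and the compositional form of the semantics. The $\beta$-conversions ($\beta\otimes$), ($\beta+_1$) and ($\beta+_2$) reduce to universal-property computations: the denotation of a redex such as $\case\inl{M}\of\cdots$ factors through $\kappa_1$ followed by the copairing, collapsing to the denotation of the reduct once the substitution lemma just proved is invoked. The $\eta$-rules are similarly handled by the universal properties of $\otimes$, of the terminal unit $I$, and of the coproduct $+$.

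The commuting conversions (let-commute), (let-case), (let-$\otimes$), (case-commute) and (case-$\otimes$) reduce to diagram chases using bifunctoriality of $\otimes$, the universal property of $+$, and the distributivity of $\otimes$ over $+$ in $\mathcal{V}$. The measurement rules are handled by direct appeal to the axioms of $\meas$ in the triangle: (measure-perm), (measure-0) and (measure-plus) are axioms 1--3, and (measure-case) combines naturality of $\meas$ (axiom 4) with the behaviour of $P$ on coproducts under distributivity. The rule (measure-1) follows because $1 \cdot I = I$ is terminal, so $\meas_A(1)$ is forced to be the unique arrow $!_A$; the denotation of $\measure 1 \mapsto M$ then collapses to discarding $\Gamma$ followed by $\brackets{M}$, which is exactly $\brackets{\Gamma, \Delta \vdash M : A}$.

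For the effect rules, the order and operations on $\brackets{\phi} \in P\brackets{\Gamma}$ are those of the effect module $P\brackets{\Gamma}$ over $E \cong PI$, so the effect-algebra axioms ($0$-$\leq$, $\bot$-antitone, ortho$_1$, ortho$_2$, $\ovee$-comm$_\leq$, $\perp$-rotate, $\ovee$-assoc$_\leq$, $\ovee$-$0$, $\leq$-$\ovee$, $\ovee$-mono, $\bot\bot_0$) and the distributivity, associativity, unit and commutativity rules for scalar multiplication all hold directly. The case rules for effects flow from the fact that $P$ preserves binary coproducts: via distributivity in $\mathcal{V}$ we obtain a natural iso $P(\brackets{\Gamma}\otimes(\brackets{A}+\brackets{B})) \cong P(\brackets{\Gamma}\otimes\brackets{A}) \times P(\brackets{\Gamma}\otimes\brackets{B})$ in $\EMod{E}$, and the case constructor on effects is the induced pairing, so (case-cong), (case-mono), (case-$\ovee$), (case-$\bot$), (case-$\leq$), (case-times), ($\beta+_i$-eff) and ($\eta+$-eff) all follow from the universal property of this product together with the fact that effect-module homomorphisms preserve the relevant structure. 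The main obstacle I expect is the bookkeeping: threading the distributivity isomorphisms consistently through the commuting conversions and the case-effect rules, and checking naturality of $\meas$ in the presence of $P$ applied to distributivity. No conceptual difficulty arises, but verifying each rule cleanly will require patient, rule-by-rule diagram chasing.
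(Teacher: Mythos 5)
Your proposal is exactly the paper's approach: the paper proves Soundness by ``straightforward induction on derivations,'' and your rule-by-rule elaboration (congruences from compositionality, $\beta$/$\eta$ and commuting conversions from universal properties and distributivity, measurement rules from the $\meas$ axioms, effect rules from the effect-module structure of $P\brackets{\Gamma}$ and preservation of coproducts by $P$) is precisely the content that the paper leaves implicit. No discrepancy to report.
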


\begin{proof}
Straightforward induction on derivations.
\end{proof}

\begin{theorem}[Completeness]
Any judgement that is true in every state-and-effect triangle is derivable.
\end{theorem}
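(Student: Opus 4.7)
The plan is to prove completeness by constructing a \emph{syntactic} state-and-effect triangle $\mathcal{T}$ in which a judgement is true precisely when it is derivable. Combined with Soundness this reduces the claim to building $\mathcal{T}$ and verifying this one characterisation: if a judgement is true in every triangle then in particular it is true in $\mathcal{T}$, so derivable.

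First I would build the category $\mathcal{V}$: objects are types of QPEL and a morphism $A \to B$ is an equivalence class $[x : A \vdash M : B]$ of single-variable terms modulo provable equality, with composition by substitution and identities given by variables (well-defined by Substitution and Functionality). The $\otimes$ and $+$ constructors supply the tensor and the coproducts; $I$ is the tensor unit and is terminal by \RetaI. Symmetric monoidal coherence, distributivity of $+$ over $\otimes$, and the universal properties of $I$ and $+$ all reduce to the $\beta$, $\eta$, and commuting-conversion rules (\Rbetaotimes, \Retaotimes, \Retaplus, \Rletlet, \Rletcase, \Rletotimes, \Rcasecase, \Rcaseotimes), with the ``local definitions'' lemma at the end of Section~3 letting us translate freely between arbitrary contexts and the single-variable view.

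Next I would build the effect side. Set $E = \{\phi : {\vdash}\, \phi \eff\}/{\equiv}$; the effect-monoid laws follow from the rules governing $0$, $(-)^\bot$, $\ovee$, and $\cdot$. Define $P(A) = \{\phi : x : A \vdash \phi \eff\}/{\equiv}$ with $P[M]([\phi]) = [[M/x]\phi]$ (well-defined by Functionality); each $P(A)$ is an $E$-module via (mult) and (dist). The rules \Rcaseovee, \Rcasebot, \Rbetaoneeff, \Rbetatwoeff, \Rcaseeta, \Rcasecong together with (case-$\leq$) make $P$ preserve finite coproducts, while \RetaI gives preservation of the terminal object. Define $S(A) = \{M : {\vdash}\, M : A\}/{=}$ with convex structure supplied by \Rmeasure at the empty context: a formal convex combination with scalars $r_i \in E$ summing to $1$ is realised by $[\measure_i r_i \mapsto M_i]$, and the $\mathcal{D}_E$-algebra axioms reduce to \Rmeasureperm, \Rmeasurezero, \Rmeasureplus, and \Rmeasureone. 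Take $\meas_A(r_1, \ldots, r_n)$ to be the class of $x : A \vdash \measure_i r_i \mapsto \kappa_i\langle\rangle : n \cdot I$, with naturality coming from \Rmeasurecase. Finally set $\alpha_A([\phi])([M]) = [[M/x]\phi] \in E$ and let $\beta$ be determined by the axiom $\alpha_A(p)(x) = \beta_A(x)(p)$.

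The concluding step is to verify that the interpretation into $\mathcal{T}$ is essentially the identity: a simultaneous induction on terms and effects shows that $\brackets{\Gamma \vdash M : A} = [M]$ (after packaging $\Gamma$ into a single variable via iterated $\lett$ using the Section~3 lemma) and $\brackets{\Gamma \vdash \phi \eff} = [\phi]$. Consequently, truth of $\Gamma \vdash M = N : A$ in $\mathcal{T}$ forces $\Gamma \vdash M = N : A$ to be derivable, and likewise for $\Gamma \vdash \phi \leq \psi$. The main obstacle will be showing that $S$ is a well-defined symmetric monoidal functor into $\Conv{E}$: the lax-monoidal map $SA \otimes SB \to S(A \otimes B)$ sends $([M],[N])$ to $[M \otimes N]$, and its coherence with the $\mathcal{D}_E$-algebra structure must be checked against the interaction of \Rmeasure with $\otimes$ (via \Rmeasureperm, \Rmeasureplus, and \Rcaseotimes); strong monoidality is not expected, since $S(A \otimes B)$ morally captures entanglement. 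Equally delicate is verifying that the induced convex structure on $SA$ really satisfies all the $\mathcal{D}_E$-algebra equations — associativity of nested measurements in particular — and that $\alpha$ and $\beta$ are natural. Once these structural verifications go through, the remainder is a uniform induction.
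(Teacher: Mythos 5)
Your proposal is correct and follows essentially the same route as the paper: both construct the syntactic (term-model) state-and-effect triangle with types as objects, single-variable terms modulo provable equality as arrows, closed effects as the effect monoid, substitution for $P$ and for the validity transformations, and the measurement construct supplying the convex structure on $S$, then conclude by showing the interpretation in this triangle is the identity up to packaging the context into one variable with iterated $\mathsf{let}$. The only difference is one of emphasis — the paper spells out the final unpacking step (deriving $\Gamma \vdash M = N : A$ from its let-packaged form via Substitution and the $\eta$-rule for $\otimes$) where you compress it, and you flag the verification of the convex-set axioms as the delicate point where the paper leaves it implicit.
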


\begin{proof}
Define a state-and-effect triangle as follows.

The category $\mathcal{V}$ is the category with objects the types of QPEL, and arrows $A \rightarrow B$ the pairs $(x,M)$ such that $x : A \vdash M : B$, quotiented by:
\begin{itemize}
\item
$(x : A \vdash M : B) = (y : A \vdash [y/x]M : B)$ if $x \neq y$ and $y$ does not occur in $M$;
\item
If $x : A \vdash M = N : B$ is derivable, then $(x : A \vdash M : B) = (x : A \vdash N : B)$.
\end{itemize}
The identity on $A$ is $x : A \vdash x : A$.  The composite of $x : A \vdash M : B$ and $y : B \vdash N : C$ is $x : A \vdash [M/y]N : C$.  This is well-defined by Substitution and Functionality.

We shall write an arrow $x : A \vdash M : B$ as $M[x] : A \rightarrow B$, and then write $M[N]$ for the term $[N/x]M$.

\subparagraph{Tensor Product}

For types $A$ and $B$, the tensor product is $A \otimes B$.

Given arrows $M[a] : A \rightarrow A'$ and $N[b] : B \rightarrow B'$, define $M \otimes N : A \otimes B \rightarrow A' \otimes B'$ by
\[ (M \otimes N)[z] = \lett a \otimes b = z \inn M[a] \otimes N[b] \enspace . \]

\subparagraph{Coproducts}

For types $A$ and $B$,  the coproduct is $A + B$, with injections
\[ x : A \vdash \inl{x} : A + B, \qquad y : B \vdash \inr{y} : A + B \enspace . \]
Given $M[a] : A \rightarrow C$ and $N[b] : B \rightarrow C$, the mediating arrow $[M,N] : A + B \rightarrow C$ is defined by
\[ [M,N][x] = \case x \of \inl{a} \mapsto M[a] \mid \inr{b} \mapsto N[b] \enspace . \]

\subparagraph{Effect Monoid}
The effect monoid $E$ is the set of all propositions $\phi$ such that $\vdash \phi \prop$,
quotiented by: $\phi = \psi$ iff $\vdash \phi \leq \psi \mbox{ and } \vdash \psi \leq \phi$.

We have that $\phi \ovee \psi$ is defined iff $\vdash \phi \ovee \psi \prop$ (equivalently, iff $\vdash \phi \leq \psi^\bot$), in which case the partial sum is $\phi \ovee \psi$.  The zero element is $0$, and the orthocomplement of $\phi$ is $\phi^\bot$.  The product of $\phi$ and $\psi$ is $\phi \cdot \psi$.

\subparagraph{Predicate Functor}
The functor $P$ is defined by: $PA$ is the set of all pairs $(x,\phi)$ such that
$x : A \vdash \phi \prop$,
quotiented by: 
\begin{itemize}
\item
$(x,\phi) = (y,[y/x]\phi)$ if $x \not\equiv y$ and $y$ does not occur in $\phi$;
\item
$(x, \phi) = (x, \psi)$ if
$x : A \vdash \phi \equiv \psi$.
\end{itemize}
This is an effect module under $0$, $^\bot$, $\ovee$, $\cdot$.

Given $M[a] : A \rightarrow B$, then $PM : PB \rightarrow PA$ is defined by
\[ PM (b, \phi) \equiv (a, [M[a]/b] \phi) \enspace . \]

\subparagraph{State Functor}
The functor $S$ is defined by: $SA$ is the set of all terms $M$ such that
$\vdash M : A$,
quotiented by: $M = N$ iff
$\vdash M = N : A$.

We make this into a convex set by setting
\[ \phi_1 M_1 + \cdots + \phi_n M_n = \measure \phi_1 \mapsto M_1 \mid \cdots \mid \phi_n \mapsto M_n \enspace . \]

Given $M[a] : A \rightarrow B$, we define $SM : SA \rightarrow SB$ by
\[ SM(N) \equiv M[N] \enspace . \]

We make $S$ into a symmetric monoidal functor by setting
\begin{align*}
\phi_{AB} & : SA \otimes SB \rightarrow S(A \otimes B) \\
\phi_{AB}(M,N) & = M \otimes N \\
\phi & : \{ * \} \rightarrow S I \\
\phi(*) & = \langle \rangle
\end{align*}

\subparagraph{Measurement Morphisms}
We have $\meas_A(\phi_1, \ldots, \phi_n) = \mathsf{measure}\ \phi_1 \mapsto \mathsf{in}_1(\langle \rangle) \mid \cdots \mid \phi_n \mapsto \mathsf{in}_n(\langle \rangle)$,
where the terms $\mathsf{in}_i(M)$ are the $n$ canonical terms such that $x : A \vdash \mathsf{in}_i(x) : \overbrace{A + \cdots + A}^n$.

\subparagraph{Validity Transformations}
The transformation $\alpha$ is given by $\alpha_A (x : A \vdash \phi \prop) (\vdash M : A) \equiv (\vdash [M / x] \phi \prop)$, and so
$\beta$ is given by
$\beta_A (\vdash M : A) (x : A \vdash \phi \prop) \equiv (\vdash [M / x] \phi \prop)$.

\subparagraph{Proof of Completeness}
We will prove that, if a judgement is true in this triangle, then it is derivable.

Let $\Gamma \equiv x_1 : A_1, \ldots, x_n : A_n$.  Then
a straightforward induction shows that:
\begin{align*}
\brackets{\Gamma \vdash M : B} & =
z : A_1 \otimes \cdots \otimes A_n \vdash \lett x_1 \otimes \cdots \otimes x_n = z \inn M : B \\
\brackets{\Gamma \vdash \phi \eff} & =
z : A_1 \otimes \cdots \otimes A_n \vdash \lett x_1 \otimes \cdots \otimes x_n = z \inn \phi \eff
\end{align*}
where this last effect is defined inductively thus:
\begin{align*}
\lett x_1 \otimes \cdots \otimes x_n = z \inn 0 & \equiv 0 \\
\lett x_1 \otimes \cdots \otimes x_n = z \inn \phi^\bot & \equiv
(\lett x_1 \otimes \cdots \otimes x_n = z \inn \phi)^\bot \\
\lett x_1 \otimes \cdots \otimes x_n = z \inn \phi \ovee \psi & \equiv
(\lett x_1 \otimes \cdots \otimes x_n = z \inn \phi) \ovee (\lett x_1 \otimes \cdots \otimes x_n = z \inn \psi) \\
\begin{array}{r}
\lett x_1 \otimes \cdots \otimes x_n = z \inn \\
\case M \of \inl{x} \mapsto \phi \mid \inr{y} \mapsto \psi \end{array} &
\equiv \begin{cases}
\case (\lett x_1 \otimes \cdots \otimes x_n = z \inn M) \of \\
\qquad \inl{x} \mapsto \phi \mid \inr{y} \mapsto \psi \\
\qquad \mbox{if } z \mbox{ occurs in } M \\
\case M \of \inl{x} \mapsto \lett x_1 \otimes \cdots \otimes x_n = z \inn \phi
\mid\\
\qquad  \inr{y} \mapsto \lett x_1 \otimes \cdots \otimes x_n = z \inn \psi
\\
\qquad  \mbox{otherwise}
\end{cases}
\end{align*}

Suppose that the judgement $\Gamma \vdash M = N : A$ is true in this triangle.  Then we have
\[ z : A_1 \otimes \cdots \otimes A_n \vdash (\lett x_1 \otimes \cdots \otimes x_n = z \inn M) =
(\lett x_1 \otimes \cdots \otimes x_n = z \inn N) : A \]
is derivable.
By Substitution, we have
\[ \Gamma \vdash (\lett x_1 \otimes \cdots \otimes x_n = x_1 \otimes \cdots \otimes x_n \inn M) =
(\lett x_1 \otimes \cdots \otimes x_n = x_1 \otimes \cdots \otimes x_n \inn N) : A \]
is derivable, and hence $\Gamma \vdash M = N : A$ is derivable by \Retaotimes.

Suppose that $\Gamma \vdash \phi \leq \psi$ is true in this triangle.  Then
\[ z : A_1 \otimes \cdots \otimes A_n \vdash (\lett x_1 \otimes \cdots \otimes x_n = z \inn \phi) \leq (\lett x_1 \otimes \cdots \otimes x_n = z \inn \psi) \]
is derivable.  By Substitution, we have
\[ \Gamma \vdash (\lett x_1 \otimes \cdots \otimes x_n = x_1 \otimes \cdots \otimes x_n \inn \phi) \leq
(\lett x_1 \otimes \cdots \otimes x_n = x_1 \otimes \cdots \otimes x_n \inn \phi) \enspace . \]
It is easy to show, by induction on $\phi$, that
\[ \Gamma \vdash ((\lett x_1 \otimes \cdots \otimes x_n = x_1 \otimes \cdots \otimes x_n \inn \phi) \equiv \phi \enspace . \]
It follows that $\Gamma \vdash \phi \leq \psi$ is derivable.
\end{proof}

\section{Qubits}
\label{section:qubits}

There are several ways in which the system may be extended to represent qubits.  The details below are based on the Measurement Calculus \cite{Danos2009}.

We extend the system with:
\[ \begin{array}{lrcl}
\mbox{Type} & A & ::= & \cdots \mid \qbit \\
\mbox{Term} & M & ::= & \cdots \mid \ket{+} \mid XM \mid ZM \mid EMM \\
\mbox{Effect} & \phi & ::= & \cdots \mid M = \ket{+_\alpha}
\end{array} \]
where $\alpha$ is a real number in $[0, 2 \pi)$.

The intention is that a term of type $\qbit$ represents a qubit.  The term $\ket{+}$ represents a qubit in the phase
\[ \ket{+} = \frac{1}{\sqrt{2}}(\ket{0} + \ket{1}) \enspace . \]
The terms $XM$ and $ZM$ denote the result of applying the Pauli-X and Z gates to the qubit $M$.  The term $EMN$ denotes the result of applying the controlled Z gate to the pair of qubits $M$ and $N$.  The effect $M = \ket{+_\alpha}$ denotes the projector on

\[ \ket{+_\alpha} = \frac{1}{\sqrt{2}}(\ket{0} + e^{i \alpha} \ket{1}) \]

Its orthocomplement, $\ket{+_\alpha}^\bot$, is the projector on

\[ \ket{-_\alpha} = \frac{1}{\sqrt{2}}(\ket{0} - e^{i \alpha} \ket{1}) \]

We write $\ket{-}$ for $Z \ket{+}$

We extend the system with the following rules of deduction.

\begin{center}
\AxiomC{}
\UnaryInfC{$\vdash \new \ket{+} : \qbit$}
\DisplayProof
\qquad
\AxiomC{$\Gamma \vdash M : \qbit$}
\UnaryInfC{$\Gamma \vdash XM : \qbit$}
\DisplayProof
\qquad
\AxiomC{$\Gamma \vdash M : \qbit$}
\UnaryInfC{$\Gamma \vdash ZM : \qbit$}
\DisplayProof
\end{center}

\begin{center}
\AxiomC{$\Gamma \vdash M : \qbit$}
\AxiomC{$\Gamma \vdash N : \qbit$}
\BinaryInfC{$\Gamma \vdash EMN : \qbit \otimes \qbit$}
\DisplayProof
\qquad
\AxiomC{$\Gamma \vdash M : \qbit$}
\RightLabel{$(0 \leq \alpha < 2 \pi$)}
\UnaryInfC{$\Gamma \vdash M = \ket{+_\alpha} \eff$}
\DisplayProof
\end{center}

\begin{center}
\AxiomC{$\Gamma \vdash M : \qbit$}
\AxiomC{$\Gamma \vdash N : \qbit$}
\BinaryInfC{$\Gamma \vdash E(XM)N = \lett x \otimes y = EMN \inn Xx \otimes Zy : \qbit \otimes \qbit$}
\DisplayProof
\qquad
\AxiomC{$\Gamma \vdash M : \qbit$}
\AxiomC{$\Gamma \vdash N : \qbit$}
\BinaryInfC{$\Gamma \vdash E(ZM)N = \lett x \otimes y = EMN \inn Zx \otimes y : \qbit \otimes \qbit$}
\DisplayProof
\end{center}

\begin{center}
\AxiomC{$\Gamma \vdash M : \qbit$}
\RightLabel{$(0 \leq \alpha < 2 \pi$)}
\UnaryInfC{$\Gamma \vdash (XM = \ket{+_\alpha}) \equiv (M = \ket{+_{- \alpha}})$}
\DisplayProof
\qquad
\AxiomC{$\Gamma \vdash M : \qbit$}
\RightLabel{($0 \leq \alpha < 2 \pi$)}
\UnaryInfC{$\Gamma \vdash (ZM = \ket{+_\alpha}) \equiv (M = \ket{+_{\alpha - \pi}})$}
\DisplayProof
\end{center}

\begin{center}
\AxiomC{$\Gamma \vdash M : \qbit$}
\UnaryInfC{$\Gamma \vdash X(XM) = M : \qbit$}
\DisplayProof
\qquad
\AxiomC{$\Gamma \vdash M : \qbit$}
\UnaryInfC{$\Gamma \vdash Z(ZM) = M : \qbit$}
\DisplayProof
\end{center}

\begin{center}
\AxiomC{$\Gamma \vdash M : \qbit$}
\RightLabel{$(0 \leq \alpha < 2 \pi$)}
\UnaryInfC{$(X(ZM) = \ket{+_\alpha}) \equiv (Z(XM) = \ket{+_\alpha})$}
\DisplayProof
\end{center}

The metatheorems in Section \ref{section:metatheorems} all still hold for the expanded system.
The expanded system can be given semantics in $\CStar$ straightforwardly.  We will show in a forthcoming paper how these rules are sufficient to prove the correctness of several quantum algorithms, including superdense coding and gate-based teleportation.

\section{Natural Isomorphisms}

It is interesting to consider the question of when the natural transformations $\alpha$ and $\beta$ are isomorphisms.
In the $\FdHilbUn$ example, $\alpha$ and $\beta$ are both isomorphisms. \cite{Jacobs2013}.
In the $\Kl{\mathcal{D}}$ example, $\alpha$ is an isomorphism but $\beta$ is not.
In the $\CStar$ example, $\beta$ is an isomorphism but $\alpha$ is not.

We can extend the system so it captures the state-and-effect triangles in which $\beta$ is an isomorphism as follows.

\begin{theorem}[Completeness]
Add to the system the rule
\begin{prooftree}
\AxiomC{$\vdash \phi \prop$}
\AxiomC{$\Gamma \vdash M : A$}
\AxiomC{$\Gamma \vdash N : A$}
\AxiomC{$\Delta, x : A \vdash \psi \prop$}
\QuaternaryInfC{$\Gamma, \Delta \vdash [(\mathsf{measure}\ \phi \mapsto M | \phi^\bot \mapsto N)/x]\psi \equiv
(\phi \cdot [M/x]\psi) \ovee (\phi^\bot \cdot [N/x]\psi)$}
\end{prooftree}
If a judgement is true in every state-and-effect triangle in which $\alpha$ and $\beta$ are natural isomorphisms, then it is derivable in this system.
\end{theorem}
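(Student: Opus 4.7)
The plan is to extend the completeness argument of Section~\ref{section:completeness}: I would build a syntactic state-and-effect triangle $T$ for the extended system, show that the new rule forces both $\alpha$ and $\beta$ to be natural isomorphisms in $T$, and then conclude as before that truth in $T$ coincides with derivability in the extended system. Since a judgement true in every triangle with $\alpha$ and $\beta$ isomorphisms is in particular true in $T$, it follows that such a judgement is derivable.

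First I would construct $T$ by the same recipe as in the proof of the previous completeness theorem, but working inside the extended system so that the new rule is available as a derivable equivalence. The category $\mathcal{V}$, the effect monoid $E$, the functors $P$ and $S$, the measurement morphisms, and the transformations $\alpha,\beta$ are all defined by the same syntactic expressions as before. The one axiom needing fresh justification is that $\beta_A$ is an arrow in $\Conv{E}$, i.e.\ an affine map. This is precisely the content of the new rule: taking $n=2$ with $\phi$ and $\phi^\bot$ summing to $1$, it reads
\[ \beta_A(\mathsf{measure}\ \phi \mapsto M \mid \phi^\bot \mapsto N)(\psi) \equiv \phi \cdot \beta_A(M)(\psi) \ovee \phi^\bot \cdot \beta_A(N)(\psi), \]
so $\beta_A$ commutes with the binary convex combinations on $SA$ induced by measurements, and by iteration with all finite convex combinations.

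Next I would show that $\beta_A : SA \rightarrow \EMod{E}[PA,E]$ is a natural isomorphism, and dually that $\alpha_A : PA \rightarrow \Conv{E}[SA,E]$ is one. For surjectivity of $\beta_A$, given an effect module homomorphism $f : PA \rightarrow E$, one must exhibit a closed term $M : A$ with $[M/x]\phi \equiv f(\phi)$ for all $\phi$; the new rule and the convex structure on $SA$ together give the closure under the convex recipes encoded by $f$. Injectivity amounts to saying that closed terms of $A$ are separated by closed-effect substitution, which is built into the syntax via the \textsf{(case)} and \textsf{(measure)} constructors. The iso property for $\alpha$ then follows either by the symmetric argument or from $\beta$ being iso together with the adjunction $\Conv{E}[-,E] \dashv \EMod{E}[-,E]$.

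The step I expect to be the main obstacle is the surjectivity of $\beta_A$: producing a closed syntactic term from an abstract effect module homomorphism is not evidently doable using only the QPEL term constructors. A cleaner alternative would be to modify the term-model construction so that $SA$ is taken directly to be $\EMod{E}[PA,E]$ while $PA$ remains as in the syntactic model, and then use the new rule to verify that the interpretation of the syntax in this modified triangle still matches derivability on the nose. Either way, the new rule is what lets us pass from syntactic states to effect-module homomorphisms (and back) without changing the theory, and it is this passage that captures exactly the triangles in which $\alpha$ and $\beta$ are isomorphisms.
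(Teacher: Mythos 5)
First, a point of comparison: the paper states this theorem but supplies no proof at all (the surrounding text only remarks that a system capturing the triangles where $\alpha$ alone is an isomorphism is still missing), so your proposal cannot be measured against an official argument and has to stand on its own. On its own it has genuine gaps. The central one is the one you half-identify yourself: in the syntactic triangle the new rule gives you exactly that $\beta_A$ is affine, i.e.\ that $\beta_A$ is a legitimate arrow of $\Conv{E}$, and nothing more. It does not give surjectivity, and it does not give injectivity either: injectivity of $\beta_A$ in the term model is the principle ``if $[M/x]\phi \equiv [N/x]\phi$ is derivable for every effect $x : A \vdash \phi \prop$, then $\vdash M = N : A$ is derivable'', an observational-equivalence-implies-provable-equality statement that appears nowhere among the rules and is certainly not ``built into the syntax via the constructors''. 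Your further claim that $\alpha$ being an isomorphism ``follows from $\beta$ being iso together with the adjunction'' is refuted by the paper's own examples: in $\CStar$, $\beta$ is an isomorphism and $\alpha$ is not, while in $\Kl{\mathcal{D}}$ the situation is reversed, so neither property transfers to the other.

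The fallback of redefining $SA := \EMod{E}[PA,E]$ does make $\beta$ an isomorphism by fiat, but it leaves $\alpha_A$ as the evaluation map $PA \rightarrow \Conv{E}[\EMod{E}[PA,E],E]$, the unit of the adjunction at $PA$, which is not an isomorphism for a general effect module; since the theorem quantifies over triangles in which \emph{both} transformations are isomorphisms, you still have no witness model in the required class. You would also owe a proof that this modified $S$ is symmetric monoidal and that the interpretation of terms (in particular of \textsf{measure}) still tracks derivability on the nose. A workable strategy would more likely pass through an embedding or completion of the term model into a triangle of the restricted class along maps that reflect the truth of judgements, with the new rule used to show that the embedding respects the interpretation of measurement terms; as it stands, neither of your two routes produces the model that the completeness argument needs.
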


I do not yet have a system that captures the state-and-effect triangles in which $\alpha$ is a natural isomorphism.  

The case where $\alpha$ is an isomorphism is particularly interesting, as it is this that allows \emph{weakest preconditions} in d'Hondt-Panangaden's sense to be defined.  

\begin{df}
Let $P$ and $Q$ be quantum predicates, and $F$ a quantum program.
Then $P$ is a \emph{precondition} for $Q$ with respect to $M$, ${P} F {Q}$, iff for all density matrices $\rho$, $\trace(P \rho) \leq \trace(Q F(\rho))$.
$P$ is the \emph{weakest precondition} for $Q$ with respect to $M$,
$P = wp(F)(Q)$
iff $P$ is the greatest precondition for $Q$ w.r.t.~$M$ under the L\"owner order.
\end{df}

The weakest precondition for $Q$ w.r.t.~$F$ always exists and is unique \cite{2006}.

\begin{lemma}
In the $\FdHilbUn$ state-and-effect triangle, the weakest precondition for $Q \in PH$ with respect to $F : SK \rightarrow SH$ is
$\alpha^{-1}(F \circ \alpha(P))$.
The operation $wp(F)$ is therefore the effect module homomorphism
$\alpha^{-1} \circ \Conv{M}[1, F] \circ \alpha : P H \rightarrow P K$.
The operation $wp$ is therefore the natural transformation
\[ wp_{HK} = \alpha^{-1} \circ \Conv{M}[1, -] \circ \alpha : \Conv{M}[SK, SH] \rightarrow \EMod{M}[PH, PK] \]
\end{lemma}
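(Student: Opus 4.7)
The plan is to exploit the explicit description of $\alpha$ in the $\FdHilbUn$ triangle. Unfolding the general recipe $\alpha_A(p)(\omega) = p \circ \omega$ for this example yields
\[ \alpha_H(Q)(\rho) = \trace(Q \rho), \]
for $Q \in PH$ and $\rho \in SH$. I would verify this identification first, as it is the key computational input for the rest of the argument.

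With this in hand, set $P := \alpha_K^{-1}\bigl(\alpha_H(Q) \circ F\bigr) \in PK$. Applying $\alpha_K$ back gives
\[ \trace(P \sigma) \;=\; \alpha_K(P)(\sigma) \;=\; \alpha_H(Q)(F(\sigma)) \;=\; \trace\bigl(Q\,F(\sigma)\bigr) \qquad (\sigma \in SK). \]
In particular the inequality $\trace(P \sigma) \leq \trace(Q\,F(\sigma))$ holds (with equality, in fact), so $\{P\}F\{Q\}$.

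For maximality, suppose $P'$ is any precondition for $Q$ with respect to $F$. Then $\trace(P' \sigma) \leq \trace(Q\,F(\sigma)) = \trace(P \sigma)$ for every density matrix $\sigma \in SK$. The key step is to invoke the standard characterisation of the L\"owner order: a self-adjoint operator $T$ on $K$ satisfies $T \geq 0$ iff $\trace(T \sigma) \geq 0$ for every density matrix $\sigma$. Applied to $T := P - P'$, this yields $P' \leq P$, so $P = wp(F)(Q)$, which gives the first formula.

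The second formula is immediate by unpacking, since $\Conv{M}[1, F]$ acts on $g : SH \to E$ by precomposition with $F$: $(\alpha_K^{-1} \circ \Conv{M}[1, F] \circ \alpha_H)(Q) = \alpha_K^{-1}(\alpha_H(Q) \circ F) = P$. For the final claim that $wp_{HK}$ is natural in $H$ and $K$, I would combine the naturality of $\alpha$ given in the state-and-effect triangle axioms with the bifunctoriality of $\Conv{M}[-,=]$; the main obstacle is only careful bookkeeping of the variances in $H$ and $K$, with no essential mathematical content beyond what has already been established.
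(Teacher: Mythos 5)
The paper gives no proof of this lemma at all, so there is nothing to compare your argument against; judged on its own, your proof is correct and is essentially the standard d'Hondt--Panangaden argument. You rightly (and silently) repair the typo in the statement: $\alpha^{-1}(F \circ \alpha(P))$ does not typecheck ($F : SK \to SH$ cannot be postcomposed onto $\alpha_H(Q) : SH \to E$, and $P$ should be $Q$), and the intended formula $\alpha_K^{-1}(\alpha_H(Q)\circ F)$ is the one you work with, consistent with the second displayed formula in the lemma where $\Conv{M}[1,F]$ is precomposition by $F$. The two substantive inputs are exactly the ones you isolate: the identification $\alpha_H(Q)(\rho) = \trace(Q\rho)$ of the validity transformation in the $\FdHilbUn$ triangle (together with the fact, asserted earlier in the paper, that $\alpha$ is an isomorphism there, which is needed both to form $\alpha_K^{-1}$ and to know that the affine map $\alpha_H(Q)\circ F : SK \to [0,1]$ is representable by an effect on $K$); and the duality between the L\"owner order on self-adjoint operators and the pointwise order of traces against density matrices, which in finite dimension is immediate since every state $\ket{\psi}\bra{\psi}$ is a density matrix. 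Your observation that the defining inequality actually holds with equality for the candidate $P$ is the right way to see both that $P$ is a precondition and that it dominates every other one. The remaining two claims of the lemma are, as you say, bookkeeping: the second is the first restated as a map $PH \to PK$, and effect-module-homomorphism-ness plus naturality follow from $\alpha$ being an effect-module isomorphism natural in its argument and from functoriality of precomposition.
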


\begin{lemma}
Given $\Gamma \vdash M : A$ and $x : A \vdash \phi \prop$, then in the $\FdHilbUn$ semantics:
\[wp(\brackets{\Gamma \vdash M : A})(\brackets{x : A \vdash \phi \prop}) = \brackets{\Gamma \vdash [M/x]\phi \prop} \]
\end{lemma}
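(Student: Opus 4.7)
The plan is to unfold the definition of $wp$ using the preceding lemma and then observe that the desired equation is exactly naturality of $\alpha$ at $\brackets{M}$, combined with the substitution lemma for the semantics of predicates.

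Let me write $F = S\brackets{\Gamma \vdash M : A} : S\brackets{\Gamma} \to S\brackets{A}$ for the state transformer associated to $M$, and write $Q = \brackets{x : A \vdash \phi \prop} \in P\brackets{A}$. By the previous lemma, in the $\FdHilbUn$ triangle,
\[ wp(F)(Q) = \alpha^{-1}_{\brackets{\Gamma}}\bigl(\alpha_{\brackets{A}}(Q) \circ F\bigr) \in P\brackets{\Gamma}. \]
So the goal reduces to showing $\alpha^{-1}_{\brackets{\Gamma}}\bigl(\alpha_{\brackets{A}}(\brackets{\phi}) \circ S\brackets{M}\bigr) = \brackets{\Gamma \vdash [M/x]\phi \prop}$, equivalently (since $\alpha$ is an iso here),
\[ \alpha_{\brackets{\Gamma}}\bigl(\brackets{[M/x]\phi}\bigr) \;=\; \alpha_{\brackets{A}}(\brackets{\phi}) \circ S\brackets{M}. \]

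Next, I would invoke the earlier substitution lemma for predicates: since $x : A \vdash \phi \prop$ has empty left context and $\Gamma \vdash M : A$, it gives $\brackets{[M/x]\phi} = P(\brackets{M})(\brackets{\phi})$ (using the coherent iso $I \otimes \brackets{\Gamma} \cong \brackets{\Gamma}$ to discharge the $1_{\brackets{\langle\rangle}} \otimes \brackets{M}$ that appears in the substitution lemma as stated). Naturality of $\alpha : P \Rightarrow \Conv{E}[S-, E]$ at the arrow $\brackets{M} : \brackets{\Gamma} \to \brackets{A}$, applied to the element $\brackets{\phi} \in P\brackets{A}$, is precisely
\[ \alpha_{\brackets{\Gamma}}\bigl(P\brackets{M}(\brackets{\phi})\bigr) \;=\; \alpha_{\brackets{A}}(\brackets{\phi}) \circ S\brackets{M}, \]
since the right-hand functor sends $\brackets{M}$ to pre-composition with $S\brackets{M}$. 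Substituting $P\brackets{M}(\brackets{\phi}) = \brackets{[M/x]\phi}$ into the left-hand side yields the displayed equation, and applying $\alpha^{-1}_{\brackets{\Gamma}}$ concludes the proof.

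The main obstacle is really only a bookkeeping one: being careful that the "$F$" whose weakest precondition we are computing is $S\brackets{M}$ rather than $\brackets{M}$ itself, and that the substitution lemma's $P(1 \otimes \brackets{M})$ collapses to $P\brackets{M}$ in the empty-context case via the coherence iso $I \otimes \brackets{\Gamma} \cong \brackets{\Gamma}$ and functoriality of $P$. Once these identifications are in place, the content of the proof is a single instance of naturality of $\alpha$.
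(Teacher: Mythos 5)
The paper states this lemma without proof, so there is nothing to compare against; your argument is correct and is evidently the intended one. You unfold $wp(S\brackets{M})(\brackets{\phi}) = \alpha^{-1}(\alpha(\brackets{\phi}) \circ S\brackets{M})$ via the preceding lemma, identify $\brackets{[M/x]\phi}$ with $P\brackets{M}(\brackets{\phi})$ by the semantic substitution lemma (modulo the unitor $I \otimes \brackets{\Gamma} \cong \brackets{\Gamma}$, which you correctly flag), and close the gap with naturality of $\alpha : P \Rightarrow \Conv{E}[S-,E]$; the two bookkeeping points you single out --- that the argument of $wp$ is the state transformer $S\brackets{M}$ rather than $\brackets{M}$, and the collapse of $P(1 \otimes \brackets{M})$ in the empty-context case --- are exactly the ones that need care.
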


\section{Conclusion, Related Work and Future Work}

We have presented QPEL, a syntactic system involving both terms and propositions that captures the categorical notion of `state-and-effect triangle' which has proved to be a general setting for describing both quantum programs, and effects.  It is therefore a promising candidate for a language that allows us to reason about and prove properties of quantum programs, and shows how such a logic for quantum effects might be added on top of any quantum programming language.

Baltag and Smets in a series of papers \cite{Baltag,Baltag2012,Baltag2011,Baltaga,Baltag2004} describe the language QDL, Quantum Dynamic Logic.  This is also a language for describing quantum programs and properties of quantum programs.  Their work differs from mine because their term language is an underspecification language (as is Dynamic Logic's), and their propositions can denote all propositions expressible in classical logic, not just those that correspond to quantum effects.

d'Hondt-Panangaden \cite{2006} and Ying \cite{Ying2011} have investigated the notion of a \emph{quantum predicate}.  Ying has given a Floyd-Hoare style logic which, given a program $F$ written in his syntax, allows the weakest precondition of a predicate with respect to $F$ to be calculated.  Their work differs from mine because they do not give a syntax for the predicates, instead using the effects on a Hilbert space as the predicates directly.

In the future, the most important tasks are to apply the system to prove the correctness of a simple quantum program (e.g. the quantum teleportation protocol or quantum broadcasting), and to look for ways to extend the system in order to represent looping and/or recursion.

I will present the system in a more modular fashion, giving subsystems that can be interpreted in other state-and-effect triangles, for example using complete lattices instead of effect modules.  This may lead to a general notion of a (2-)category of state-and-effect triangles.

I will also try to capture the conditions that make $\alpha$ or $\beta$ a natural isomorphism.  I will investigate the conditions that a state-and-effect triangle needs to satisfy to represent the type of qubits correctly, possibly involving Selinger's notion of a Quantum Flowchart Category.  I will investigate formal translations between this system and other quantum programming languages, such as the quantum lambda calculus \cite{Sellinger2010}.  I will investigate which of Ying's equations on weakest preconditions \cite{Ying2011} can be derived within our system.


\paragraph{Acknowlegdements}

Thanks to Sam Staton and Bart Jacobs for many helpful discussions.

\bibliography{type}
\end{document}